\theoremstyle{plain}
\newtheorem{theorem}{Theorem}
\newtheorem{corollary}[theorem]{Corollary}
\newtheorem{lemma}[theorem]{Lemma}
\newtheorem{proposition}[theorem]{Proposition}
\theoremstyle{definition}
\theoremstyle{remark}
\newtheorem{remark}[theorem]{Remark}
\newcommand{\R}{\mathbb{R}}
\begin{document}

\title{Stochastic Routing and Scheduling Policies for Energy Harvesting Communication Networks}
\author{Miguel Calvo-Fullana, Carles Ant\'on-Haro, Javier Matamoros, and Alejandro Ribeiro
\thanks{
This work is supported by ARL DCIST CRA W911NF-17-2-0181 and the Intel Science and Technology Center for Wireless Autonomous Systems.
}
\thanks{
M. Calvo-Fullana, and A. Ribeiro are with the Department of Electrical and Systems Engineering, University of Pennsylvania, Philadelphia, PA 19104, USA (e-mail: \mbox{cfullana}@seas.upenn.edu; \mbox{aribeiro}@seas.upenn.edu).
}
\thanks{
C. Ant\'on-Haro, and J. Matamoros are with the Centre Tecnol\`ogic de Telecomunicacions de Catalunya (CTTC/CERCA), 08860 Castelldefels, Barcelona, Spain (e-mail: \mbox{carles.anton}@cttc.cat; \mbox{javier.matamoros}@cttc.cat).
}
\thanks{
This work has been presented in part at the 2017 IEEE International Conference on Acoustics, Speech and Signal Processing (ICASSP)\cite{calvo2017stochastic}.
}
}

\maketitle

\begin{abstract}
In this paper, we study the joint routing-scheduling problem in energy harvesting communication networks. Our policies, which are based on stochastic subgradient methods on the dual domain, act as an energy harvesting variant of the stochastic family of backpresure algorithms. Specifically, we propose two policies: (i) the Stochastic Backpressure with Energy Harvesting (SBP-EH), in which a node's routing-scheduling decisions are determined by the difference between the Lagrange multipliers associated to their queue stability constraints and their neighbors'; and (ii) the Stochastic Soft Backpressure with Energy Harvesting (SSBP-EH), an improved algorithm where the routing-scheduling decision is of a probabilistic nature. For both policies, we show that given sustainable data and energy arrival rates, the stability of the data queues over all network nodes is guaranteed. Numerical results corroborate the stability guarantees and illustrate the minimal gap in performance that our policies offer with respect to classical ones which work with an unlimited energy supply.
\end{abstract}

\begin{IEEEkeywords}
Energy harvesting, wireless networks, backpressure, routing, scheduling.
\end{IEEEkeywords}

\IEEEpeerreviewmaketitle

\section{Introduction}
\label{sec:Introduction}

Providing wireless devices with Energy Harvesting (EH) capabilities enables them to acquire energy from their surroundings. The sources from which to obtain such energy can be of a varied nature, with some of the most common being thermal, vibrational or solar sources \cite{vullers2010energy}. Such ample variety of energy sources, coupled with recent hardware advancements, enables devices to acquire sufficient energy to power themselves. This, in turn, frees these devices from the constraints that traditional battery-only operation imposes. Nonetheless, the random and intermittent nature of this new energy supply calls for a new approach to the design of communication policies. 

As a consequence, there is significant interest in the study of communication devices powered by energy harvesting. The scenarios of EH-aware communication studied in the literature are vast and range from throughput maximization \cite{yang2012optimal,tutuncuoglu2012optimum,ozel2011transmission,ho2012optimal,ozel2014optimal}, source-channel coding \cite{calvo2017reconstruction,castiglione2014energy,orhan2014source,castiglione2012energy}, estimation \cite{yang2013wireless,knorn2015distortion,calvo2016sensor}, simultaneous information and power transfer \cite{huang2013simultaneous,xu2014multiuser,zheng2014information,liu2014secrecy} and many others (see \cite{ulukus2015energy} for an overall review of current research efforts). 

The appearance of multiple interconnected devices powered by energy harvesting results in communication networks formed by self-sustainable and perpetually communicating nodes. In such scenarios, there is the necessity of designing efficient routing and scheduling algorithms that explicitly take into account the energy harvesting process. In this sense, there have been some previous efforts in developing communication policies for these types of multi-hop networks. In general, the full characterization of the optimal transmission policies is a difficult problem, as optimal transmission policies are heavily coupled throughout the network. Under full non-causal knowledge of the energy harvesting process, the optimal transmission policies of a simpler two-hop network have been studied in \cite{orhan2015energy}. A more realistic approach is the consideration of non-causal knowledge of the energy harvesting process. Under this assumption, the authors in\cite{tapparello2014dynamic} jointly optimize data compression and transmission tasks to obtain a close-to-optimal policy. In \cite{lin2007asymptotically}, the authors propose an EH-aware routing scheme that is asymptotically optimal with respect to the network size. The authors in \cite{gatzianas2010control}, address the EH scheduling problem for both single-hop and multi-hop networks, and provide a joint admission control and routing policy. Also, in the same line, the authors in \cite{huang2013utility} propose a policy which improves on the multi-hop performance bounds of \cite{gatzianas2010control}. Overall, non-causal policies are typically designed under the assumption of independent and identically distributed (i.i.d.) or Markov energy harvesting and data arrival processes, and Lyapunov optimization techniques are used to derive their queue stability results.

In this paper, we study the problem of jointly routing and scheduling data packets in an energy harvesting communication network. We start by introducing the system model in Section \ref{sec:SystemModel}. We consider a communication network where each node independently generates traffic for delivery to a specific destination and collaborates with the other nodes in the network to ensure the delivery of all data packets. In this way, each node decides the next suitable hop for each packet in its queue (routing), and when to transmit it (scheduling). The solution to this problem\textemdash when the nodes are not EH-powered\textemdash is given by the backpressure (BP) algorithm \cite{tassiulas1992stability}. When the nodes are powered by energy harvesting, the previous works \cite{gatzianas2010control} and \cite{huang2013utility} considered a similar problem, which consists in finding admission control and resource allocation policies that satisfy network stability and energy causality while attaining close-to-optimal performance. In our work, instead, the goal is to find stabilizing policies given the data rates. Also, while previous works \cite{gatzianas2010control,huang2013utility} require data and energy arrival processes to be i.i.d. or Markov, we only require them to be ergodic, which is a weaker requirement. Furthermore, our approach to the problem is also markedly different. While the works \cite{gatzianas2010control} and \cite{huang2013utility} relied on queueing theory and Lyapunov drift arguments to find stabilizing policies, we instead interpret the scheduling and routing problem as a stochastic optimization problem. This allows us to resort to a dual stochastic subgradient descent algorithm \cite{ribeiro2010ergodic} to solve the joint routing-scheduling problem. 

 We devote Section \ref{sec:Alg} to the development of the proposed stochastic joint routing and scheduling algorithms. The main issue to tackle is that the introduction of energy harvesting constraints results in a causality problem regarding the energy consumption. In order to solve this, we introduce a modified problem formulation that allows us to ensure causality. Under this framework, we propose two different policies. The first, which we denote Stochastic Backpressure with Energy Harvesting (SBP-EH), is a policy of rather simple nature. The network nodes track the pressure of the data flows by computing the difference between the Lagrange multipliers associated to their queue stability constraints and the ones of their neighbors (instead of their data queues as in the classical backpressure algorithm). Then, the Lagrange multipliers associated with the battery state reduce the pressure when the stored energy in the node decreases. The resulting routing-scheduling decision is to transmit the flow with highest pressure. The second policy, which we name Stochastic Soft Backpressure with Energy Harvesting (SSBP-EH), is a probabilistic policy. In this policy, the nodes perform the same tracking of pressure as the SBP-EH policy. However, instead of transmitting the flow with the highest pressure, the flows are equalized in an inverse waterfilling manner. This results in a routing-scheduling probability mass function, where the transmit decision is taken as a sample of this distribution. This second policy, while not as simple as the previous one, provides several improvements in the stabilization speed of the network, as well as a reduction in the packets in queue and packet delivery delay once the network is stabilized. 

Theoretical guarantees, namely, queue stability and energy causality are discussed in Section \ref{sec:Stability}. For both policies, we provide the necessary battery capacity which ensures the proper behavior of the algorithms. Furthermore, we also certify that given sustainable data and energy arrival rates, the stability of the data queues over all network nodes is guaranteed. After this, we dedicate Section \ref{sec:NumericalResults} to simulations assessing the performance of our proposed policies and verify that they show a minimal gap in performance with respect to classical policies operating with an unlimited energy supply. Finally, we provide some concluding remarks in Section \ref{sec:Conclusions}.

\section{System Model}
\label{sec:SystemModel}

Consider a communication network given by the graph $\mathcal{G}=(\mathcal{N},\mathcal{E})$, where $\mathcal{N}$ is the set of $N$ nodes in the network and $\mathcal{E} \subseteq \mathcal{N} \times \mathcal{N}$ is the set of communication links, such that if node $i$ is capable of communicating with node $j$, we have $(i,j) \in \mathcal{E}$. Moreover, we define the neighborhood of node $i$ as the set $\mathcal{N}_i=\{j | (i,j)\in \mathcal{E}\}$. The network supports $K$ information flows (which we index by the set  $\mathcal{K}$), where for a flow $k \in \mathcal{K}$, the destination node is denoted by $N^k_{(dest)}$. At a time slot $t$, each $k \in \mathcal{K}$ flow at the $i$-th node generates $a_i^k[t]$ packets to be delivered to the node $N^k_{(dest)}$. This packet arrival process is assumed to be stationary with mean $\mbE \bigl[ a^k_i[t] \bigr] = a^k_i$. At the same time, the $i$-th node routes $r_{ij}^k[t]$ packets to its neighbors  $j \in \mathcal{N}_i$, while simultaneously being routed $r_{ji}^k[t]$ packets. For simplicity, at each time slot, we restrict each node to route one single packet to its neighbors. Therefore, the nodes have the following routing constraint
\begin{align}
\sum_{k \in \mathcal{K}} \sum_{j \in \mathcal{N}_i} r_{ij}^k[t]\leq 1, \quad i \in \mathcal{N}.
\label{eq:routingCap}
\end{align}
Furthermore, each node in the network keeps track of the number of packets awaiting to be transmitted for each flow. Denoting by $q_i^k[t]$ the $k$-th flow data queue at the $i$-th node and time slot $t$, the evolution of the queue is given by
\begin{align}
q_i^k[t+1]=q_i^k[t]+a_i^k[t]+\sum_{j \in \mathcal{N}_i} r_{ji}^k[t]-\sum_{j \in \mathcal{N}_i} r_{ij}^k[t],\label{eq:queueEvol}
\end{align}
for all $i \in \mathcal{N}$ and $k \in \mathcal{K}$. The objective is to determine routing policies $r_{ij}^k[t]$ such that the queues in \eqref{eq:queueEvol} remain stable while satisfying the routing constraints given by \eqref{eq:routingCap}. By grouping all the queues in a vector $\mathbf{q}[t]=\{q_i[t]\}$, we say that the routing policies $r_{ij}^k[t]$ guarantee stability if there exists a constant $Q$ such that for some arbitrary time $T$ we have 
\begin{align} 
\Pr { \max_{t \geq T} \|\mathbf{q}[t]\| \leq Q \given \mathbf{q}[T] } = 1.
\label{eq:queueStability}
\end{align}
This is to say that, almost surely, no queue becomes arbitrarily large. In turn, we can guarantee this if the average rate at which packets enter the queues is smaller than the rate at which they exit them. In order to formally state this, let us denote the ergodic limits of processes $a^k_i[t]$ and $r^k_{ij}[t]$ by
\begin{align} 
a_i^k &=\mbE \bigl[ a^k_i[t] \bigr] = \lim_{t \to \infty}\frac{1}{t} \sum\limits_{l=1}^{t} a_i^k[l], \\
r_{ij}^k &= \mbE \bigl[ r_{ij}^k [t] \bigr] =  \lim_{t \to \infty}\frac{1}{t} \sum\limits_{l=1}^{t} r_{ij}^k[l].
\end{align}
Then, in order to have stable data queues in the network, it suffices to satisfy the condition
\begin{align} 
a^k_i < \sum_{j \in \mathcal{N}_i} r_{ij}^k - \sum_{j \in \mathcal{N}_i} r_{ji}^k.
\label{eq:queueStabilityStrictInequality}
\end{align}
for all $i \in \mathcal{N}$ and $k \in \mathcal{K}$. If there exist routing variables $r_{ij}^k$ satisfying this inequality, then the queue evolution in \eqref{eq:queueEvol} follows a supermartingale expression, and the stability condition given by \eqref{eq:queueStability} is then guaranteed by the martingale convergence theorem \cite[Theorem 5.2.9]{durrett2010probability}. Alternatively, by introducing arbitrary concave functions $f_{ij}^k : \R \to \R$, we can formulate this as the following optimization problem
\begin{subequations}
\begin{align}\label{eq:optProblemNoEH}
   \underset{\sum\limits_{k,j} r_{ij}^k \leq 1}{\text{maximize}}  \quad & \sum_{i \in \mathcal{N}} \sum_{k \in \mathcal{K}} \sum_{j \in \mathcal{N}_i}  f_{ij}^k \left( r_{ij}^k \right) \\
   \text{subject to}
		\quad   & a^k_i \leq \sum_{j \in \mathcal{N}_i} r_{ij}^k - \sum_{j \in \mathcal{N}_i} r_{ji}^k, \quad k\in\mathcal{K},i\in\mathcal{N},
					\label{eq:optProblemConstraintQueueNoEH} 
\end{align}
\end{subequations}
Observe that in \eqref{eq:optProblemConstraintQueueNoEH} the inequality allows for equality whereas in \eqref{eq:queueStabilityStrictInequality} the inequality is strict. This mismatch is necessary because optimization problems are not well behaved on open sets. We can then think of \eqref{eq:optProblemConstraintQueueNoEH} as a relaxation of \eqref{eq:queueStabilityStrictInequality} but one of little practical consequence as it is always possible to add a small slack term to \eqref{eq:optProblemConstraintQueueNoEH} to produce a non-strict inequality that implies the strict inequality in \eqref{eq:queueStabilityStrictInequality}. We don't do that to avoid a cumbersome term of little conceptual value. We emphasize that implicit to \eqref{eq:routingCap} is the constraint $\sum_{k \in \mathcal{K}} \sum_{j \in \mathcal{N}_i} r_{ij}^k \leq 1$ for all $i \in \mathcal{N}$, which is the same as \eqref{eq:routingCap} but for average variables. We will ensure later that the algorithm we design satisfies the constraint not just on average but for all time instances -- see Section \ref{sec:Alg}.

Assuming data arrival rates satisfying $\sum_{k \in \mathcal{K}} \sum_{j \in \mathcal{N}_i} r_{ij}^k \leq 1$ for all $i \in \mathcal{N}$ as well as the inequalities in \eqref{eq:optProblemConstraintQueueNoEH} exist, the objective is to design an algorithm such that the instantaneous routing variables $r_{ij}^k[t]$ satisfy $\mbE \bigl[ r^k_{ij}[t] \bigr]=r^k_{ij}$ and the routing constraints in \eqref{eq:routingCap} are satisfied for all time slots. This is the optimization problem that the backpressure family of algorithms solve. By resorting to a stochastic subgradient method on the dual domain, a direct comparison can be established between data queues and Lagrange multipliers \cite{huang2011delay}. Then, the choice of objective function in the optimization problem \eqref{eq:optProblemNoEH} determines the resulting variant of the backpressure algorithm. For example, on one hand, the stochastic backpressure (SBP) algorithm \cite{tassiulas1992stability} can be recovered by the use of a linear objective function. On the other hand, the choice of a strongly concave objective function leads to the soft stochastic backpressure (SSBP) algorithm \cite{ribeiro2009stochastic}.

\subsection{Routing and Scheduling with Energy Harvesting}

Different from classical approaches \cite{tassiulas1992stability,ribeiro2009stochastic}, we consider that the network nodes are powered by energy harvesting. At time slot $t$, the $i$-th node harvests $e_i[t]$ units of energy, where the energy harvesting process is assumed to be stationary with mean $\mbE \bigl[e_i[t] \bigr] = e_i$. We consider a normalized energy harvesting process, where the routing of one packet consumes one unit of energy. Furthermore, we consider packet transmission to be the only energy-consuming action taken by the nodes. Under these conditions and denoting by $b_i[t]$ the energy stored in the $i$-th node's battery at time $t$, the following energy causality constraint must be satisfied for all time slots
\begin{align}
\sum_{k \in \mathcal{K}} \sum_{j \in \mathcal{N}_i} r_{ij}^k[t]  \leq b_i[t], \quad i \in \mathcal{N}.
\label{eq:energyCausality}
\end{align}
Additionally, we consider that nodes have a finite battery of capacity $b_i^{\max}$. Then, we can write the battery dynamics as
\begin{align}
b_i[t+1]=\biggl[b_i[t]-\sum_{k \in \mathcal{K}} \sum_{j \in \mathcal{N}_i} r_{ij}^k[t]+e_i[t]\biggr]_{0}^{b_i^{\max}}
\label{eq:batteryEvol}
\end{align}
for $i \in \mathcal{N}$, where $[\cdot]_{0}^{b_i^{\max}}$ denotes the projection to the interval $[0,b_i^{\max}]$. In order to introduce these constraints into the optimization problem \eqref{eq:optProblemNoEH}, we denote the ergodic limit of the energy harvesting process $e_i[t]$ by
\begin{align} 
e_i =\mbE \bigl[ e_i[t] \bigr] = \lim_{t \to \infty}\frac{1}{t} \sum\limits_{l=1}^{t} e_i[l].
\end{align}
Then, substituting the battery dynamics given by \eqref{eq:batteryEvol} in the energy causality constraint \eqref{eq:energyCausality} and then taking the ergodic limits on both sides of the inequality, we obtain the following average constraint in the routing variables 
\begin{align}
\sum_{k \in \mathcal{K}} \sum_{j \in \mathcal{N}_i} r_{ij}^k  \leq e_i, \quad i \in \mathcal{N}.
\end{align}
This states that the average amount of energy spent must be less than the average energy harvested. Then, we introduce this constraint into problem \eqref{eq:optProblemNoEH}, resulting in the following optimization problem
\begin{subequations}
\begin{align}
   \underset{\sum\limits_{k,j} r_{ij}^k \leq 1}{\text{maximize}}  \quad & \sum_{i \in \mathcal{N}} \sum_{k \in \mathcal{K}} \sum_{j \in \mathcal{N}_i}  f_{ij}^k \left( r_{ij}^k \right) \\
   \text{subject to}
		\quad   & a^k_i \leq \sum_{j \in \mathcal{N}_i} r_{ij}^k - \sum_{j \in \mathcal{N}_i} r_{ji}^k, \quad k\in\mathcal{K},i\in\mathcal{N} 
					\label{eq:optProblemConstraintQueueOriginal} \\
		\quad   & \sum_{k \in \mathcal{K}} \sum_{j \in \mathcal{N}_i} r_{ij}^k \leq e_i, \quad i\in\mathcal{N}.
					\label{eq:optProblemConstraintEnergyOriginal} 
\end{align}
\label{eq:optProblemOriginal}
\end{subequations}
Assuming data and energy arrival rates satisfying \eqref{eq:optProblemConstraintQueueOriginal} and \eqref{eq:optProblemConstraintEnergyOriginal} exist, the goal is to design an algorithm such that the instantaneous routing variables $r_{ij}^k[t]$ satisfy $\mbE \bigl[ r^k_{ij}[t] \bigr]=r^k_{ij}$ and the constraints \eqref{eq:routingCap} and \eqref{eq:energyCausality} are satisfied for all time slots. However, the use of the average energy constraint \eqref{eq:optProblemConstraintEnergyOriginal} presents a causality problem, as a solution satisfying \eqref{eq:optProblemConstraintEnergyOriginal} does not guarantee that the energy causality constraint in \eqref{eq:energyCausality} is satisfied for all time slots. In order to circumvent this, we propose the introduction of the following modified optimization problem
\begin{subequations}
\begin{align}
   \underset{
   \begin{subarray}{l}
  		\sum\limits_{k,j} r_{ij}^k \leq 1,\\
  		x_i^k \in [0,\bar{x}_i^k]
  	\end{subarray}   
   }{\text{maximize}}  \quad & \sum_{i \in \mathcal{N}} \sum_{k \in \mathcal{K}} \sum_{j \in \mathcal{N}_i}  f_{ij}^k \left( r_{ij}^k \right)
        - \sum_{i \in \mathcal{N}} \sum_{k \in \mathcal{K}} \bar{\gamma}_i^{k} x_i^k \\
   \text{subject to}
		\quad   & a^k_i \leq \sum_{j \in \mathcal{N}_i} \bigl(r_{ij}^k - r_{ji}^k\bigr) + x_i^k, k\in\mathcal{K},i\in\mathcal{N} 
					\label{eq:optProblemConstraintQueue} \\
		\quad   & \sum_{k \in \mathcal{K}} \sum_{j \in \mathcal{N}_i} r_{ij}^k \leq e_i, \quad i\in\mathcal{N}
					\label{eq:optProblemConstraintEnergy} 
\end{align}
\label{eq:optProblem}
\end{subequations}
This optimization problem differs from \eqref{eq:optProblemOriginal} in the introduction of an auxiliary variable $x_i^k$. This variable is restricted to lie in the interval $[0,\bar{x}_i^k]$, with $\bar{x}_i^k$ being a constant whose value is determined by the system parameters. This auxiliary variable appears in the queue stability constraint \eqref{eq:optProblemConstraintQueue}, where it helps to satisfy the constraint if necessary. Furthermore, we have added the term $-\sum_{i \in \mathcal{N}} \sum_{k \in \mathcal{K}} \bar{\gamma}_i^{k}  x_i^k$ in the objective function, where $\bar{\gamma}_i^{k}$ is a constant parameter. The value of this parameter $\bar{\gamma}_i^{k}$ is chosen such that the optimal value of the Lagrange multipliers of the queue constraint \eqref{eq:optProblemConstraintQueue} lies in the interval $[0,\bar{\gamma}_i^{k}]$. We later show in Section \ref{sec:Stability} that this allows us to satisfy the energy causality constraints while also stabilizing the data queues.

\section{Joint Routing and Scheduling Algorithm}
\label{sec:Alg}

As we mentioned previously, in order to solve optimization problem posed in \eqref{eq:optProblem} we resort to a primal-dual method. To start, let us define the vector $\mathbf{r}=\{r_{ij}^k,x_i^k\}$ collecting the routing variables $r_{ij}^k$ and auxiliary variables $x_i^k$ and the vector $\boldsymbol{\lambda}=\{\gamma_i^k,\beta_i\}$ collecting the queue multipliers $\gamma_i^k$ associated with constraint \eqref{eq:optProblemConstraintQueue} and battery multipliers $\beta_i$ corresponding to constraint \eqref{eq:optProblemConstraintEnergy}. Furthermore, we collect the implicit optimization constraints in the set $\mathcal{R}=\bigl\{\sum_{k,j} r_{ij}^k \leq 1, x_i^k \in [0,\bar{x}_i^k]\bigr\}$. Then, we write the Lagrangian of the optimization problem \eqref{eq:optProblem} as follows
\begin{align}
\mathcal{L}(\mathbf{r},\boldsymbol{\lambda})&=\sum_{i \in \mathcal{N}} \sum_{k \in \mathcal{K}}  \sum_{j \in \mathcal{N}_i} f_{ij}^k \left( r_{ij}^k \right) 
	- \sum_{i \in \mathcal{N}} \sum_{k \in \mathcal{K}} \bar{\gamma}_i^{k} x_i^k \nonumber\\
	&+\sum_{k \in \mathcal{K}}   \sum_{i \in \mathcal{N}} \gamma_i^k\biggl(\sum_{j \in \mathcal{N}_i} r_{ij}^k - \sum_{j \in \mathcal{N}_i} r_{ji}^k + x_i^k  - a^k_i \biggr) \nonumber\\
	&+\sum_{i \in \mathcal{N}} \beta_i\biggl(e_i - \sum_{k \in \mathcal{K}} \sum_{j \in \mathcal{N}_i} r_{ij}^k \biggr).
\label{eq:Lagrangian}
\end{align}
The Lagrange dual function is then given by
\begin{align}
g(\boldsymbol{\lambda})=\max_{\mathbf{r} \in \mathcal{R}} \mathcal{L}(\mathbf{r},\boldsymbol{\lambda}).
\label{eq:dualFunction}
\end{align}
An immediate issue that arises when trying to solve this problem is that network nodes have no knowledge of the data arrival rates $a_i^k$ nor the energy harvesting rates $e_i$. Nonetheless, the nodes observe the instantaneous rates $a_i^k[t]$ and $e_i[t]$, hence we resort to using these instantaneous variables. Furthermore, we can reorder the Lagrangian \eqref{eq:Lagrangian} to allow for a separate maximization over network nodes, where each node only needs the queue multipliers of its neighboring nodes. The routing variables can then be obtained as follows
 \begin{align}
r^k_{ij} [t]:= \argmax_{\sum\limits_{k,j} r_{ij}^k \leq 1} 
 \sum_{k \in \mathcal{K}} & \sum_{j \in \mathcal{N}_i}  \biggl( f_{ij}^k \left( r_{ij}^k \right)  \nonumber \\
 & + r_{ij}^k \left(  \gamma_i^k[t] - \gamma_j^k[t] -\beta_i[t] \right)\biggr),
\label{eq:dualFunctionMax}
\end{align}
for $i\in\mathcal{N}$. In a similar way, the auxiliary variables at each node are given by
 \begin{align}
x_i^k [t] := \argmax_{x_i^k \in [0,\bar{x}_i^k]} x_i^k \left( \gamma_i^k[t] - \bar{\gamma}_i^{k}\right).
\label{eq:dualFunctionMaxAuxiliary}
\end{align}
This is simply a threshold operation, where $x_i^k [t]=0$ if $\gamma_i^k[t] \leq \bar{\gamma}_i^{k}$ and $x_i^k [t]=\bar{x}_i^k$ if $\gamma_i^k[t] > \bar{\gamma}_i^{k}$. Now, since the dual function in \eqref{eq:dualFunction} is convex, we can minimize it by performing a stochastic subgradient descent. Then, the dual updates are given by the following expressions
\begin{align}
\gamma_i^k[t+1]:=\biggl[\gamma_i^k[t]+a_i^k[t] - x_i^k[t]+\sum_{j \in \mathcal{N}_i}\bigl( r_{ji}^k[t]-r_{ij}^k[t]\bigr)\biggr]^+ 
\label{eq:dualQueue}
\end{align}
\begin{align}
\beta_i[t+1]:=\biggl[\beta_i[t] -e_i[t] + \sum_{k \in \mathcal{K}} \sum_{j \in \mathcal{N}_i} r_{ij}^k[t] \biggr]^+\label{eq:dualBattery}
\end{align}
where $[\cdot]^+$ is the projection on the nonnegative orthant. For compactness, we also express the dual updates in vector form as $\boldsymbol{\lambda}[t+1]:=\left[\boldsymbol{\lambda}[t]-\mathbf{s}[t]\right]^+$, where $\mathbf{s}[t]$ corresponds to the vector collecting the stochastic subgradients. Since the algorithm that we propose is designed to be run in an online fashion, we have considered a fixed step size in the dual updates. Specifically, we have used a unit step size. This allows a clear comparison between dual variables and data queues and battery dynamics as outlined in Figure \ref{fig:relQueuesBatteries}. For the case of the data queues, the difference between their dynamics \eqref{eq:queueEvol} and those of their Lagrange multiplier counterparts \eqref{eq:dualQueue} is given by the auxiliary variable in the dual update. Assume a packet is either routed or not, i.e., $r_{ij}^k[t] \in \{0,1\}$. Then the dual variables $q_i^k[t]$ follow the data queues $\gamma_i^k[t]$ until $\gamma_i^k[t] > \bar{\gamma}_i^{k}$, at which point, the dual variables are pushed back by the auxiliary variable $x_i^k [t]=\bar{x}_i^k$. From this point forward, the queue and multiplier dynamics lose their symmetry, coupling again when the queue empties. In a similar way, a comparison can also be drawn between the battery dynamics \eqref{eq:batteryEvol} and the battery dual update \eqref{eq:dualBattery}. In this case, the symmetry exists in a mirrored way, as the relationship between the battery state $b_i[t]$ and its multipliers $\beta_i[t]$ is given by $b_i[t]=b_i^{\max}-\beta_i[t]$. Different from the case of data queues, the coupling between the battery state and its multipliers is never lost.

\begin{figure}[t!]
    \centering
    \subfigure[Data queues.]
    {
        \includegraphics[scale=1.21]{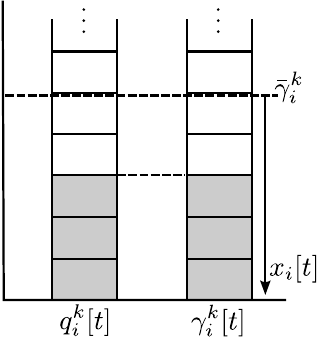}
        \label{fig:relQueues}
    }
    \subfigure[Batteries.]
    {
        \includegraphics[scale=1.21]{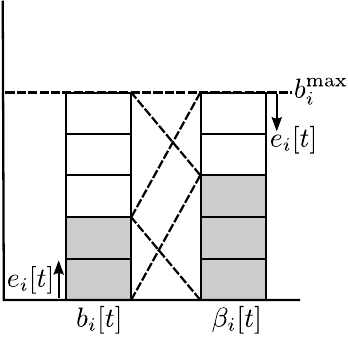}
        \label{fig:relBatteries}
    }
    \caption{Relationship between data queues, batteries and their Lagrange multipliers. Example with $\bar{\gamma}_i^{k}=5$ and $b_i^{\max}=5$. }
    \label{fig:relQueuesBatteries}
\end{figure}

Next, we consider some choices of the objective function $f_{ij}^k \bigl( r_{ij}^k \bigr)$ in the optimization problem \eqref{eq:optProblem} which lead to familiar formulations of the backpressure algorithm adapted to the energy harvesting process. The steps of the two resulting policies are summarized in Algorithm \ref{alg:Algorithm}.

\begin{algorithm}[t]
    \caption{Stochastic Backpressure with Energy Harvesting.}
    \label{alg:Algorithm}
    \begin{algorithmic}[1]
		\State \textbf{Initialize:} Set $\gamma_i^k[0]:=0$ and $\beta_i[0]:=b_i^{\max}-b_i[0]$.
        \State \textbf{Step 1(a):} Routing-scheduling decision (SBP-EH).
        \State $r^k_{ij}[t] := \argmax\limits_{\sum\limits_{k,j} r_{ij}^k \leq 1} 
 \sum\limits_{k \in \mathcal{K}} \sum\limits_{j \in \mathcal{N}_i} r_{ij}^k \bigl(  w_{ij}^k + \gamma_i^k[t] - \gamma_j^k[t] -\beta_i[t]\bigr)$
        \State \textbf{Step 1(b):} Routing-scheduling decision (SSBP-EH).
        \State $r^k_{ij}[t] := \frac{1}{2}\biggl[w_{ij}^k + \gamma_i^k[t] - \gamma_j^k[t] -\beta_i[t] - \nu_i[t] \biggr]^+$        
        \State \textbf{Step 2:} Compute auxiliary variable.
        \State $x_i^k [t] := \argmax\limits_{x_i^k \in [0,\bar{x}_i^k]} x_i^k \left( \gamma_i^k[t] - \bar{\gamma}_i^{k}\right)$        
        \State \textbf{Step 3:} Update dual variables.
        \State $\gamma_i^k[t+1]:=\biggl[\gamma_i^k[t]+a_i^k[t]-x_i^k[t]+\sum\limits_{j \in \mathcal{N}_i} \bigl( r_{ji}^k[t]- r_{ij}^k[t]\bigr) \biggr]^+$  
        \State $\beta_i[t+1]:=\biggl[\beta_i[t] -e_i[t] + \sum\limits_{k \in \mathcal{K}} \sum\limits_{j \in \mathcal{N}_i} r_{ij}^k[t] \biggr]^+$
        \State \textbf{Step 4:} For all neighbors $j \in \mathcal{N}_i$, send dual variables $\gamma_i^k[t+1]$ and receive dual variables $\gamma_j^k[t+1]$.
        \State \textbf{Step 5:} Set $t:=t+1$ and go to Step 1.
        \end{algorithmic}
\end{algorithm}

\subsection{Stochastic Backpressure with Energy Harvesting (SBP-EH)}
\label{subsec:SBP-EH}

 Consider functions $f_{ij}^k \bigl( r_{ij}^k \bigr)$ which are linear with respect to the routing variables, i.e., taking the form $f_{ij}^k \bigl( r_{ij}^k \bigr)=w_{ij}^k r_{ij}^k$, where $w_{ij}^k$ is an arbitrary weight. In this case, we recover a version of the stochastic backpressure algorithm adapted to the energy harvesting process. For a linear objective function, the maximization in \eqref{eq:dualFunctionMax} leads to the routing variables
 \begin{align}
r^k_{ij}[t] := \argmax_{\sum\limits_{k,j} r_{ij}^k \leq 1} 
 \sum_{k \in \mathcal{K}} \sum_{j \in \mathcal{N}_i} r_{ij}^k \left(  w_{ij}^k + \gamma_i^k[t] - \gamma_j^k[t] -\beta_i[t]\right).
\label{eq:routingBP}
\end{align}
To solve the maximization in \eqref{eq:routingBP} it suffices to find the flow over the neighboring nodes with the largest differential $w_{ij}^k + \gamma_i^k[t] - \gamma_j^k[t] -\beta_i[t]$ and if it is positive, set its corresponding routing variable $r_{ij}^k[t]$ to one while the other variables are kept to zero. This algorithm, when $w_{ij}^k=0$, is analogous to the stochastic form of backpressure. In the classical backpressure algorithm, the flow with the largest queue differential $q_i^k[t] - q_j^k[t]$ is chosen. Interpreted in its stochastic form, the flow with the largest Lagrange multiplier difference $\gamma_i^k[t] - \gamma_j^k[t]$ is chosen. In the SBP-EH policy, the stochastic form of backpressure adds the battery multiplier $\beta_i[t]$. As the battery depletes, the value of $\beta_i[t]$ increases and the pressure to transmit of this node decreases.

\subsection{Stochastic Soft Backpressure with Energy Harvesting (SSBP-EH)}
\label{subsec:SSBP-EH}

Now, we consider a quadratic plus linear term function given by $f_{ij}^k \bigl( r_{ij}^k \bigr)= -\bigl(r_{ij}^k\bigr)^2+w_{ij}^k r_{ij}^k$. This leads to a stochastic soft backpressure algorithm \cite{ribeiro2009stochastic}, where the routing variables obtained by the maximization in \eqref{eq:dualFunctionMax} are given by
 \begin{align}
r^k_{ij}[t] := \frac{1}{2}\biggl[w_{ij}^k + \gamma_i^k[t] - \gamma_j^k[t] -\beta_i[t] - \nu_i[t] \biggr]^+,
\label{eq:routingSSBP}
\end{align}
where $\nu_i[t]$ are the Lagrange multipliers ensuring $\sum_{k \in \mathcal{K}} \sum_{j \in \mathcal{N}_i} r_{ij}^k \leq 1$ for all $i \in \mathcal{N}$. This expression can be understood a form of inverse waterfilling. An example of this solution is shown in Figure \ref{fig:inverseWF}. Let us construct rectangles of height $H_{ij}^k[t]=w_{ij}^k + \gamma_i^k[t] - \gamma_j^k[t] -\beta_i[t]$ and scale them by the widths $W_{ij}^k[t]=1/2$. For each node, every possible flow and neighbor routing destination is represented by one of these rectangles. Then, water is poured from the bottom, in an inverse manner until the  $\nu_i[t]$ waterlevel is reached. The resulting area of water filled inside the rectangles represents the probability mass function of the routing variables. Then, the node takes its routing decision by drawing a sample from this distribution.

While not as simple as the SBP-EH algorithm, the SSBP-EH algorithm presents an important improvement over the former. The introduction of an strongly concave objective function allows the dual function in \eqref{eq:dualFunction} to be differentiable. This, in turn, makes the algorithm take the form of an stochastic gradient rather than a stochastic subgradient (which is the case of SBP-EH), therefore improving the expected rate of stabilization of the algorithm from $O(1/\sqrt{t})$ to $O(1/t)$ \cite[Chapter 3.2]{bertsekas2015convex}.

\begin{figure}[!t]
	\centering
	\includegraphics[width=0.95\columnwidth ]{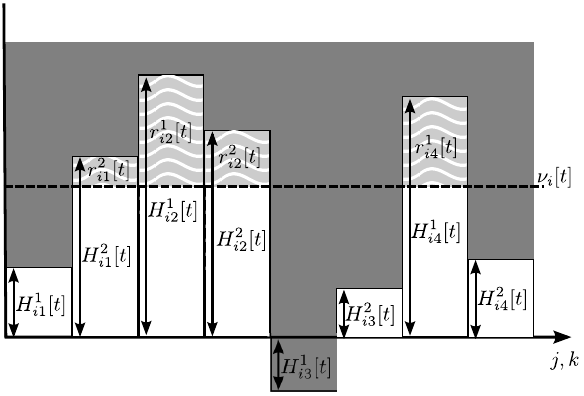}
	\caption{Example of a SSBP-EH routing decision for a node node serving two flows and four neighbors.}
	\label{fig:inverseWF}
\end{figure}

\section{Causality and Stability Analysis}
\label{sec:Stability}

In this section, we provide theoretical guarantees on the behavior of the proposed policies. On one hand, we establish the conditions under which the routing policies generated by Algorithm \ref{alg:Algorithm} satisfy the energy causality constraints \eqref{eq:energyCausality}. And, on the other hand, we provide stability guarantees on the network queues.

\subsection{Energy Causality}

As we mentioned previously in Section \ref{sec:SystemModel}, the presence of the energy harvesting constraints in the stochastic optimization problem \eqref{eq:optProblemOriginal} introduces the question of causality. In order to have a tractable problem, we have introduced the energy harvesting constraints in an average sense to the routing-scheduling problem. This includes an additional issue, as not all possible solutions satisfy the original causality constraints \eqref{eq:energyCausality}  for all time slots. In order to deal with this, we have modified the  problem formulation with the introduction of an auxiliary variable. By appropriately choosing the domain of this auxiliary variable and the nodes' battery capacity, we can ensure that the causality constraints are satisfied.

\begin{proposition}[Energy Causality]
\label{prop:EnergyCausality}
Let the data arrivals of a node $i \in \mathcal{N}$ satisfy $a_i^k[t] \leq \bar{a}_i^k$ for all $k \in \mathcal{K}$ flows and all time slots $t$. Further, let $\bar{x}_i^k \geq \bar{\gamma}_i^k +  \bar{a}_i^k + |\mathcal{N}_i|$ and let the battery capacity satisfy $b_i^{\max} \geq w_{ij}^k + \bar{\gamma}_i^k +  \bar{a}_i^k + |\mathcal{N}_i|$. Then both the SBP-EH and SSBP-EH policies shown in Algorithm \ref{alg:Algorithm} satisfy the energy consumption causality constraint $\sum_{k \in \mathcal{K}} \sum_{j \in \mathcal{N}_i} r_{ij}^k[t]  \leq b_i[t]$ of this node for all time slots.
\end{proposition}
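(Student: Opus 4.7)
The plan is to prove Proposition \ref{prop:EnergyCausality} by a joint induction on $t$ that maintains three invariants at every node $i$: (a) the queue multiplier bound $\gamma_i^k[t] \leq \bar{\gamma}_i^k + \bar{a}_i^k + |\mathcal{N}_i|$ for every flow $k$; (b) the mirror identity $b_i[t] = b_i^{\max} - \beta_i[t]$; and (c) the energy causality inequality $\sum_{k,j} r_{ij}^k[t] \leq b_i[t]$. Invariant (a) decouples from the battery dynamics and can be handled first; then (b) and (c) are closed on each other step by step.

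For (a), I would split on whether $\gamma_i^k[t] \leq \bar{\gamma}_i^k$. If it is, the threshold rule \eqref{eq:dualFunctionMaxAuxiliary} gives $x_i^k[t]=0$, and \eqref{eq:dualQueue} yields $\gamma_i^k[t+1] \leq \gamma_i^k[t] + a_i^k[t] + \sum_{j} r_{ji}^k[t] \leq \bar{\gamma}_i^k + \bar{a}_i^k + |\mathcal{N}_i|$ after using $a_i^k[t] \leq \bar{a}_i^k$ and $\sum_j r_{ji}^k[t] \leq |\mathcal{N}_i|$. If instead $\gamma_i^k[t] > \bar{\gamma}_i^k$, then $x_i^k[t] = \bar{x}_i^k \geq \bar{\gamma}_i^k + \bar{a}_i^k + |\mathcal{N}_i|$, and substituting into \eqref{eq:dualQueue} gives $\gamma_i^k[t+1] \leq \gamma_i^k[t] - \bar{\gamma}_i^k$, again below the target bound. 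For (b), I would use the structural symmetry between \eqref{eq:batteryEvol} and \eqref{eq:dualBattery}: under $\tilde b_i[t] := b_i^{\max} - \beta_i[t]$, the nonnegativity projection on $\beta_i$ exactly matches the upper cap at $b_i^{\max}$ in the battery dynamics, so the two recursions agree at any step at which the lower projection to zero in \eqref{eq:batteryEvol} does not fire. Since inactivity of that projection is precisely causality at time $t$, (b) at time $t+1$ follows from (c) at time $t$, with the base case handled by the initialization $\beta_i[0] = b_i^{\max} - b_i[0]$.

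The crux is showing (c) at time $t$ assuming (a) and (b) hold. For SBP-EH, if \eqref{eq:routingBP} assigns $r_{ij}^k[t] = 1$ to some flow, then $w_{ij}^k + \gamma_i^k[t] - \gamma_j^k[t] - \beta_i[t] > 0$; using $\gamma_j^k[t] \geq 0$, invariant (a), and the capacity hypothesis $b_i^{\max} \geq w_{ij}^k + \bar{\gamma}_i^k + \bar{a}_i^k + |\mathcal{N}_i|$ gives $\beta_i[t] < b_i^{\max}$, and integrality of the packet and energy counters sharpens this to $\beta_i[t] \leq b_i^{\max} - 1$, i.e.\ $b_i[t] \geq 1 = \sum_{k,j} r_{ij}^k[t]$ by (b). For SSBP-EH, the routing constraint $\sum r \leq 1$ enforced by $\nu_i[t]$ makes the inequality trivial whenever $b_i[t] \geq 1$; the work is in the regime $b_i[t] < 1$.

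The main obstacle is precisely the SSBP-EH case with $b_i[t] \in [0,1)$. The per-flow bound $H_{ij}^k := w_{ij}^k + \gamma_i^k[t] - \gamma_j^k[t] - \beta_i[t] \leq b_i[t]$ follows immediately from (a) and (b), but summing the waterfilling values $r_{ij}^k[t] = \tfrac{1}{2}[H_{ij}^k - \nu_i[t]]^+$ over the active set $A = \{(j,k) : H_{ij}^k > \nu_i[t]\}$ only gives the loose bound $\sum_{k,j} r_{ij}^k[t] \leq \tfrac{|A|}{2} b_i[t]$, which is not $\leq b_i[t]$ in general. To close the induction, I would exploit the specific structure of the inverse waterfilling in \eqref{eq:routingSSBP}: when $\beta_i[t]$ approaches $b_i^{\max}$, all heights $H_{ij}^k$ shrink uniformly, and the Lagrange multiplier $\nu_i[t]$ that enforces $\sum r \leq 1$ must cut down the contributions of the active flows. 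The quantitative claim to extract is that either $|A|$ shrinks sufficiently or $\nu_i[t]$ absorbs enough excess to keep the total allocation below $b_i^{\max} - \beta_i[t]$.
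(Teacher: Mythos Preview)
Your invariants (a) and (b) and the SBP-EH branch of (c) are correct and in fact capture exactly the same mechanism as the paper's proof: bound $\gamma_i^k[t]$ uniformly by $\bar\gamma_i^k+\bar a_i^k+|\mathcal N_i|$, identify $b_i[t]=b_i^{\max}-\beta_i[t]$, and conclude that at an empty battery the pressure $w_{ij}^k+\gamma_i^k[t]-\gamma_j^k[t]-\beta_i[t]$ is nonpositive, so no packet is sent. The paper compresses all of this into a few lines, asserting directly that ``it suffices to show $r_{ij}^k[t]=0$ when $b_i[t]=0$'' and then bounding the pressure; your explicit induction with the three coupled invariants is a more careful packaging of the same idea.

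Where you diverge is the SSBP-EH case with $b_i[t]\in(0,1)$. You are right that the crude estimate $\sum r\le \tfrac{|A|}{2}b_i[t]$ does not close, and the waterfilling refinement you sketch will not close it either: when $\sum_{(j,k)} \tfrac12[H_{ij}^k]^+<1$ the multiplier $\nu_i[t]$ is zero and absorbs nothing, so with several active flows of height close to $b_i[t]$ the total allocation can genuinely exceed $b_i[t]$. The paper does not attempt this analysis at all. Its reduction to the single case $b_i[t]=0$ implicitly relies on the battery taking integer values, which holds because (i) the SSBP-EH ``routing variables'' in \eqref{eq:routingSSBP} are interpreted as a probability mass function from which a single $0/1$ transmission is \emph{sampled} (see the discussion around Figure~\ref{fig:inverseWF}), and (ii) with unit-energy packets and integer harvests the battery then stays on the integer lattice, so $b_i[t]\in(0,1)$ never occurs. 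Under that reading your obstacle disappears and the SBP-EH argument applies verbatim to SSBP-EH. If instead one insists on fractional routing and fractional energy, the proposition as stated is not established by either your argument or the paper's; the fix is the sampling interpretation, not a sharper waterfilling bound.
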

\begin{proof}
To satisfy the energy causality constraints it suffices to show that no transmission occurs when there is no available energy in the battery. This is to say that $r_{ij}^k[t]=0$ for all $j,k$ if $b_i[t]=0$. In expressions \eqref{eq:routingBP} and \eqref{eq:routingSSBP}, corresponding to the SBP-EH and SSBP-EH algorithms, it suffices to ensure that $w_{ij}^k + \gamma_i^k[t] - \gamma_j^k[t] -\beta_i[t] < 0$ when the battery is empty. In this case, when $b_i[t]=0$, the battery dual update takes the value $\beta_i[t]=b_i^{\max}$. By the dual update \eqref{eq:dualQueue} and the minimum value of $\bar{x}_i^k$, the data arrival bound $\bar{a}_i^k$ and the number of neighbors $|\mathcal{N}_i|$, we can upper bound the multiplier difference by $\gamma_i^k[t] - \gamma_j^k[t] \leq \bar{\gamma}_i^k + \bar{a}_i^k + |\mathcal{N}_i| $ over all time slots $t$. We can write then $w_{ij}^k + \bar{\gamma}_i^k + \bar{a}_i^k + |\mathcal{N}_i| - b_i^{\max} \leq 0$, and since $b_i^{\max} \geq w_{ij}^k + \bar{\gamma}_i^k + \bar{a}_i^k + |\mathcal{N}_i|$, this ensures that $r_{ij}^k[t]=0$. Hence, satisfying the energy causality constraint $\sum_{k \in \mathcal{K}} \sum_{j \in \mathcal{N}_i} r_{ij}^k[t]  \leq b_i[t]$ for all time slots.
\end{proof}

In order to ensure that the energy causality constraints are satisfied, the stochastic subgradients are required to be bounded. This, in turn, forces the probability distribution of the data arrival process to be bounded above by a constant $\bar{a}_i^k$. In practice, for the case in which the probability distribution is not bounded, when a time slot with over $\bar{a}_i^k$ packets occurs, only $\bar{a}_i^k$ data packets can be kept in the queue and the rest must be discarded to satisfy the energy causality constraints.

\begin{remark}\label{rmk_1}
It follows from Proposition \ref{prop:EnergyCausality} that the consumption of energy is causal. Namely, that energy is not consumed before it is harvested. Although we have not remarked it here, the same is true of packet transmissions. Namely, packets are not transmitted before they arrive. The causality of queues is a natural property of stochastic dual descent but the causality of energy is not. The latter requires the introduction of the auxiliary variables $x_i^k[t]$ to the problem formulation. This modification is such that the instantaneous values of the dual variables generated by Algorithm \ref{alg:Algorithm} can be shown to be bounded above. In turn, having bounded dual variables allows us to establish conditions on $\bar{x}_i^k$ that ensure that the energy causality constraints \eqref{eq:energyCausality} are satisfied for all time slots as shown in Proposition \ref{prop:EnergyCausality}. The dual variables associated with energy constraints that would be generated by Algorithm \ref{alg:Algorithm} could become arbitrarily large if we write a stochastic dual descent algorithm for the original problem in \eqref{eq:optProblemOriginal}. The addition of the variables $\bar{x}_i^k$ enforces causality but it is not without a cost; see Remark \ref{rmk_2}.
\end{remark}

\subsection{Queue Stability}

Now, we provide guarantees on the queue stability of the proposed policies. Different from other works (such as\cite{gatzianas2010control,huang2013utility}), which analyze queue stability with Lyapunov drift notions, we resort to duality theory arguments. We do this by leveraging on the fact that the proposed algorithm is a type of stochastic subgradient algorithm. The approach we take to showing that our algorithm makes the queues stable in the sense of \eqref{eq:queueStability} is to show that the solution provided by Algorithm \ref{alg:Algorithm} satisfies the queue stability constraints \eqref{eq:optProblemConstraintQueue} almost surely. Then, we show that if the optimal queue multipliers $\gamma_i^k$ are upper bounded by $\bar{\gamma}_i^{k}$, the solution provided by Algorithm \ref{alg:Algorithm} also satisfies the stability constraint without auxiliary variable \eqref{eq:optProblemConstraintQueueOriginal}. Hence, the data queues satisfy the stability condition \eqref{eq:queueStability}.

First, we start by recalling a common property of the stochastic subgradient.

\begin{proposition}
\label{prop:avgSubgradient}
Given the dual variables $\boldsymbol{\lambda}[t]$, the conditional expected value $\mbE\left[\mathbf{s}[t] | \boldsymbol{\lambda}[t]\right]$ of the stochastic subgradient $\mathbf{s}[t]$ is a subgradient of the dual function. Namely, for any $\boldsymbol{\lambda}$,
\begin{align}
\mbE\left[ \mathbf{s}^T[t] | \boldsymbol{\lambda}[t]\right]\left(\boldsymbol{\lambda}[t] - \boldsymbol{\lambda}\right)
\geq g(\boldsymbol{\lambda}[t]) -g(\boldsymbol{\lambda}).
\label{prop:avgSubgradientEQ}
\end{align}
\end{proposition}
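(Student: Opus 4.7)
The plan is to realize $\mbE[\mathbf{s}[t] | \boldsymbol{\lambda}[t]]$ as a (deterministic) subgradient of $g$ at $\boldsymbol{\lambda}[t]$ built from a maximizer of the Lagrangian, and then to invoke the textbook subgradient inequality for the convex function $g$. The two structural facts driving the argument are that $\mathcal{L}(\mathbf{r},\boldsymbol{\lambda})$ is \emph{affine} in $\boldsymbol{\lambda}$ for each fixed $\mathbf{r}$, and that Algorithm~\ref{alg:Algorithm} selects the primal variables so that $\mathbf{r}^\star := \mbE[\mathbf{r}[t] | \boldsymbol{\lambda}[t]]$ is a maximizer of $\mathcal{L}(\cdot,\boldsymbol{\lambda}[t])$ over $\mathcal{R}$.

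First I would extract from \eqref{eq:Lagrangian} the partial derivatives of $\mathcal{L}$ with respect to $\boldsymbol{\lambda}$, which are simply the constraint slacks
\begin{equation*}
c_i^k(\mathbf{r}) := \sum_{j\in\mathcal{N}_i}(r_{ij}^k-r_{ji}^k) + x_i^k - a_i^k,
\qquad d_i(\mathbf{r}) := e_i - \sum_{k,j} r_{ij}^k,
\end{equation*}
and write the exact identity $\mathcal{L}(\mathbf{r},\boldsymbol{\lambda}) = \mathcal{L}(\mathbf{r},\boldsymbol{\lambda}[t]) + \sum_{i,k}(\gamma_i^k-\gamma_i^k[t])\,c_i^k(\mathbf{r}) + \sum_i (\beta_i-\beta_i[t])\,d_i(\mathbf{r})$ that follows from this affine dependence. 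Next I would evaluate it at $\mathbf{r}=\mathbf{r}^\star$. Since by \eqref{eq:dualFunctionMax}--\eqref{eq:dualFunctionMaxAuxiliary} the choice $\mathbf{r}^\star$ achieves the maximum defining $g(\boldsymbol{\lambda}[t])$ (deterministically for SBP-EH, and for SSBP-EH because the sampling distribution is designed to have mean equal to the continuous inverse waterfilling maximizer), one has $\mathcal{L}(\mathbf{r}^\star,\boldsymbol{\lambda}[t]) = g(\boldsymbol{\lambda}[t])$. Combining with the trivial bound $g(\boldsymbol{\lambda})\geq \mathcal{L}(\mathbf{r}^\star,\boldsymbol{\lambda})$, valid for every $\boldsymbol{\lambda}$, the identity becomes
\begin{equation*}
g(\boldsymbol{\lambda}) \geq g(\boldsymbol{\lambda}[t]) + \sum_{i,k}(\gamma_i^k-\gamma_i^k[t])\,c_i^k(\mathbf{r}^\star) + \sum_i(\beta_i-\beta_i[t])\,d_i(\mathbf{r}^\star),
\end{equation*}
which is precisely the subgradient inequality for $g$ at $\boldsymbol{\lambda}[t]$ with subgradient vector $(c_i^k(\mathbf{r}^\star),d_i(\mathbf{r}^\star))$.

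Finally, I would identify this subgradient with $\mbE[\mathbf{s}[t] | \boldsymbol{\lambda}[t]]$. Reading off \eqref{eq:dualQueue}--\eqref{eq:dualBattery}, the components of $\mathbf{s}[t]$ are $s_{\gamma_i^k}[t] = \sum_j(r_{ij}^k[t]-r_{ji}^k[t]) + x_i^k[t] - a_i^k[t]$ and $s_{\beta_i}[t] = e_i[t] - \sum_{k,j} r_{ij}^k[t]$, i.e.\ the same $c_i^k,d_i$ evaluated at $\mathbf{r}[t]$ but with the instantaneous samples $a_i^k[t],e_i[t]$ in place of their means. Linearity of $c_i^k,d_i$ in $\mathbf{r}$ together with $\mbE[\mathbf{r}[t] | \boldsymbol{\lambda}[t]] = \mathbf{r}^\star$ handles the routing/auxiliary part, and the exogeneity of the arrival/harvesting processes (drawn independently of the past $\boldsymbol{\lambda}$-history) gives $\mbE[a_i^k[t] | \boldsymbol{\lambda}[t]] = a_i^k$ and $\mbE[e_i[t] | \boldsymbol{\lambda}[t]] = e_i$; together these yield $\mbE[\mathbf{s}[t] | \boldsymbol{\lambda}[t]] = (c_i^k(\mathbf{r}^\star),d_i(\mathbf{r}^\star))$. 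A sign rearrangement of the displayed inequality then delivers \eqref{prop:avgSubgradientEQ}.

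I expect the main obstacle to be precisely this last exogeneity step: the proposition tacitly requires $\mbE[a_i^k[t] | \boldsymbol{\lambda}[t]] = a_i^k$ and $\mbE[e_i[t] | \boldsymbol{\lambda}[t]] = e_i$ at every $t$, which is slightly stronger than the bare stationarity/ergodicity assumed elsewhere and should be stated as an ``arrivals are not influenced by the algorithm'' hypothesis. Once that is granted, everything else is a direct consequence of the affine-in-$\boldsymbol{\lambda}$ structure of $\mathcal{L}$ and the primal maximization already performed inside Algorithm~\ref{alg:Algorithm}; no Lyapunov or martingale machinery is needed here.
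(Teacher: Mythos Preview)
Your argument is essentially the same as the paper's: both evaluate the Lagrangian at the primal maximizer corresponding to $\boldsymbol{\lambda}[t]$, use $g(\boldsymbol{\lambda})\geq \mathcal{L}(\mathbf{r}^\star,\boldsymbol{\lambda})$ for arbitrary $\boldsymbol{\lambda}$, and subtract. The paper simply writes out $g(\boldsymbol{\lambda}[t])$ and $g(\boldsymbol{\lambda})$ explicitly with the expectation pulled outside and then takes the difference, whereas you phrase the same computation via the affine-in-$\boldsymbol{\lambda}$ identity; your explicit flagging of the exogeneity hypothesis $\mbE[a_i^k[t]\,|\,\boldsymbol{\lambda}[t]]=a_i^k$, $\mbE[e_i[t]\,|\,\boldsymbol{\lambda}[t]]=e_i$ is a point the paper uses silently when it substitutes the ergodic means.
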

\begin{proof}
Take the Lagrangian \eqref{eq:Lagrangian} and substitute the ergodic definitions $\mbE \bigl[ a^k_i[t] \bigr] = a^k_i$ and $\mbE \bigl[e_i[t] \bigr] = e_i$. Then, the resulting Lagrangian is given by
\begin{align}
\mathcal{L}(\mathbf{r},\boldsymbol{\lambda})&=\sum_{i \in \mathcal{N}} \sum_{k \in \mathcal{K}}  \sum_{j \in \mathcal{N}_i} f_{ij}^k \left( r_{ij}^k \right) 
	- \sum_{i \in \mathcal{N}} \sum_{k \in \mathcal{K}} \bar{\gamma}_i^{k} x_i^k \nonumber\\
	&+\sum_{k \in \mathcal{K}}   \sum_{i \in \mathcal{N}} \gamma_i^k\biggl(\sum_{j \in \mathcal{N}_i} \bigl(r_{ij}^k - r_{ji}^k\bigr) + x_i^k  -\mbE\bigl[ a^k_i[t] \bigr] \biggr) \nonumber\\
	&+\sum_{i \in \mathcal{N}} \beta_i\biggl(\mbE \bigl[e_i[t] \bigr] - \sum_{k \in \mathcal{K}} \sum_{j \in \mathcal{N}_i} r_{ij}^k \biggr).
\end{align}
Now, recall that the dual function is then given by $g(\boldsymbol{\lambda})=\max_{\mathbf{r} \in \mathcal{R}} \mathcal{L}(\mathbf{r},\boldsymbol{\lambda})$, and consider the dual function at time $t$, given by $g(\boldsymbol{\lambda}[t])$. The primal maximization of this dual function is given by the variables $r_{ij}^k[t]$ and $x_i^k[t]$ in \eqref{eq:dualFunctionMax} and \eqref{eq:dualFunctionMaxAuxiliary}, respectively. Hence, we can write the dual function as
\begin{align}
g(\boldsymbol{\lambda}&[t])=\sum_{i \in \mathcal{N}} \sum_{k \in \mathcal{K}}  \sum_{j \in \mathcal{N}_i} f_{ij}^k \left( r_{ij}^k[t] \right) 
	- \sum_{i \in \mathcal{N}} \sum_{k \in \mathcal{K}} \bar{\gamma}_i^{k} x_i^k[t] \nonumber\\
	&+\sum_{k \in \mathcal{K}}   \sum_{i \in \mathcal{N}} \gamma_i^k[t]\mbE\biggl[\sum_{j \in \mathcal{N}_i} \bigl(r_{ij}^k[t] - r_{ji}^k[t]\bigr) + x_i^k[t] - a^k_i[t] \biggr] \nonumber\\
	&+\sum_{i \in \mathcal{N}} \beta_i[t] \mbE\biggl[ e_i[t] - \sum_{k \in \mathcal{K}} \sum_{j \in \mathcal{N}_i} r_{ij}^k[t] \biggr],
\end{align}
where we have moved the expectation operator $\mbE[\cdot]$ out of the subgradients due to its linearity. Then we can use the compact notation for the multiplier vector $\boldsymbol{\lambda}[t]$ and the subgradient $\mathbf{s}[t]$, and substitute the conditional expected value of the subgradients $\mbE\left[\mathbf{s}[t] | \boldsymbol{\lambda}[t]\right]$ to obtain
\begin{align}
g(\boldsymbol{\lambda}[t])&= \sum_{i \in \mathcal{N}} \sum_{k \in \mathcal{K}}  \sum_{j \in \mathcal{N}_i} f_{ij}^k \left( r_{ij}^k[t] \right) - \sum_{i \in \mathcal{N}} \sum_{k \in \mathcal{K}} \bar{\gamma}_i^{k} x_i^k[t]\nonumber\\
&+\mbE\left[ \mathbf{s}^T[t] | \boldsymbol{\lambda}[t]\right]\boldsymbol{\lambda}[t].
\label{prop:avgSubgradientDualLT}
\end{align}
For any arbitrary $\boldsymbol{\lambda}$ we simply have
\begin{align}
g(\boldsymbol{\lambda}) &\geq \sum_{i \in \mathcal{N}} \sum_{k \in \mathcal{K}}  \sum_{j \in \mathcal{N}_i} f_{ij}^k \left( r_{ij}^k[t] \right) - \sum_{i \in \mathcal{N}} \sum_{k \in \mathcal{K}} \bar{\gamma}_i^{k} x_i^k[t]\nonumber\\
&+\mbE\left[ \mathbf{s}^T[t] | \boldsymbol{\lambda}[t]\right]\boldsymbol{\lambda}
\label{prop:avgSubgradientDualL}.
\end{align}
Then it simply suffices to subtract expression \eqref{prop:avgSubgradientDualL} from \eqref{prop:avgSubgradientDualLT} to obtain inequality \eqref{prop:avgSubgradientEQ}.
\end{proof}

Proposition \ref{prop:avgSubgradient} shows that the stochastic subgradient is an average descent direction of the dual function $g(\boldsymbol{\lambda}[t])$. Now, we proceed to quantify the average descent distance of the dual update.
\begin{lemma}
\label{lmm:avgDescentDual}
Consider the dual updates of Algorithm \ref{alg:Algorithm} given by \eqref{eq:dualQueue} and \eqref{eq:dualBattery}, and let $\mbE\left[\|\mathbf{s}[t] \|^2| \boldsymbol{\lambda}[t]\right] \leq S^2$ be a bound on the second moment of the norm of the stochastic subgradients $\mathbf{s}[t]$. Then, the dual updates satisfy the inequality
\begin{align}
\label{lmm:avgDescentDualeq1}
\mbE\left[\|\boldsymbol{\lambda}[t+1]-\boldsymbol{\lambda}^{\star}\|^2 | \boldsymbol{\lambda}[t]\right]
\leq &\|\boldsymbol{\lambda}[t] - \boldsymbol{\lambda}^{\star}\|^2 + S^2 \nonumber\\
&-2 \left(g(\boldsymbol{\lambda}[t]) -g(\boldsymbol{\lambda}^{\star}) \right)
\end{align}
\end{lemma}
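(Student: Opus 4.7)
The plan is a textbook stochastic dual descent computation, leveraging three ingredients: the non-expansiveness of projection onto the nonnegative orthant, the linearity of conditional expectation, and the subgradient inequality already established in Proposition \ref{prop:avgSubgradient}.

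First, I would rewrite the two scalar dual updates \eqref{eq:dualQueue} and \eqref{eq:dualBattery} in the compact vector form $\boldsymbol{\lambda}[t+1] = [\boldsymbol{\lambda}[t] - \mathbf{s}[t]]^+$, where $\mathbf{s}[t]$ collects the stochastic subgradients. Since the optimal multipliers satisfy $\boldsymbol{\lambda}^{\star} \geq \mathbf{0}$ and the projection $[\cdot]^+$ onto the nonnegative orthant is non-expansive with respect to points in that orthant, we have
\begin{align}
\|\boldsymbol{\lambda}[t+1] - \boldsymbol{\lambda}^{\star}\|^2
= \|[\boldsymbol{\lambda}[t]-\mathbf{s}[t]]^+ - [\boldsymbol{\lambda}^{\star}]^+\|^2
\leq \|\boldsymbol{\lambda}[t]-\mathbf{s}[t]-\boldsymbol{\lambda}^{\star}\|^2. \nonumber
\end{align}

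Next, I would expand the right-hand side as
\begin{align}
\|\boldsymbol{\lambda}[t]-\boldsymbol{\lambda}^{\star}\|^2 - 2\mathbf{s}^T[t](\boldsymbol{\lambda}[t]-\boldsymbol{\lambda}^{\star}) + \|\mathbf{s}[t]\|^2, \nonumber
\end{align}
and take conditional expectation given $\boldsymbol{\lambda}[t]$. The first term is deterministic given $\boldsymbol{\lambda}[t]$, and the last term is bounded by $S^2$ by hypothesis. For the middle term, applying Proposition \ref{prop:avgSubgradient} gives $\mbE[\mathbf{s}^T[t]\,|\,\boldsymbol{\lambda}[t]](\boldsymbol{\lambda}[t]-\boldsymbol{\lambda}^{\star}) \geq g(\boldsymbol{\lambda}[t]) - g(\boldsymbol{\lambda}^{\star})$. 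Substituting these three bounds yields \eqref{lmm:avgDescentDualeq1}.

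There is no real obstacle here; the result is essentially a one-step descent identity for projected stochastic subgradient methods, and all machinery (non-expansion, the subgradient inequality) is already in place. The only thing to be mildly careful about is that $\boldsymbol{\lambda}^{\star}$ lies in the nonnegative orthant so that the non-expansion step is valid, and that we recognize $\mathbf{s}[t]$ as measurable given the random arrivals at time $t$ so the conditional expectations split cleanly over the cross term and the squared-norm term.
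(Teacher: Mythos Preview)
Your proposal is correct and follows essentially the same argument as the paper's proof: non-expansiveness of the projection, expansion of the square, conditional expectation, then the subgradient bound $S^2$ and Proposition~\ref{prop:avgSubgradient} with $\boldsymbol{\lambda}=\boldsymbol{\lambda}^{\star}$. Your added remark that $\boldsymbol{\lambda}^{\star}\geq\mathbf{0}$ is needed for the non-expansion step is a welcome point of precision that the paper leaves implicit.
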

\begin{proof}
Start by considering the squared distance between the dual variables at time $t+1$ and their optimal value. This distance is given by $\|\boldsymbol{\lambda}[t+1]-\boldsymbol{\lambda}^{\star}\|^2$. Then, we substitute the dual variable $\boldsymbol{\lambda}[t+1]$ by its update $\boldsymbol{\lambda}[t+1]=\left[\boldsymbol{\lambda}[t]-\mathbf{s}[t]\right]^+$. Then, since the projection is nonexpansive we can upper bound the aforementioned distance by
\begin{align}
\|\boldsymbol{\lambda}[t+1]-\boldsymbol{\lambda}^{\star}\|^2 \leq 
\|\boldsymbol{\lambda}[t]-\mathbf{s}[t] - \boldsymbol{\lambda}^{\star} \|^2.
\end{align}
Then, we simply expand the square norm to obtain the expression
\begin{align}
\|\boldsymbol{\lambda}[t+1]-\boldsymbol{\lambda}^{\star}\|^2 \leq &
\|\boldsymbol{\lambda}[t]-\boldsymbol{\lambda}^{\star}\|^2 + \| \mathbf{s}[t]\|^2 \nonumber\\
&-2\mathbf{s}^T[t]\left(\boldsymbol{\lambda}[t] - \boldsymbol{\lambda}^{\star}\right).
\end{align}
Now, by taking the expectation conditioned by $\boldsymbol{\lambda}[t]$ on both sides we obtain
\begin{align}
\mbE\bigl[\|\boldsymbol{\lambda}[t+1]-\boldsymbol{\lambda}^{\star}\|^2 | \boldsymbol{\lambda}[t]&\bigr]  \leq 
\|\boldsymbol{\lambda}[t]-\boldsymbol{\lambda}^{\star}\|^2 +
\mbE\left[\|\mathbf{s}[t] \|^2 | \boldsymbol{\lambda}[t]\right] \nonumber\\
&-2\mbE \left[ \mathbf{s}^T[t] | \boldsymbol{\lambda}[t] \right] \left(\boldsymbol{\lambda}[t] - \boldsymbol{\lambda}^{\star}\right)
\end{align}
And then by substituting the second term on the right hand side by the bound $\mbE\left[\|\mathbf{s}[t] \|^2| \boldsymbol{\lambda}[t]\right] \leq S^2$ and the third term by the application of Proposition \ref{prop:avgSubgradient} with $\boldsymbol{\lambda} = \boldsymbol{\lambda}^{\star}$, we have expression \eqref{lmm:avgDescentDualeq1}.
\end{proof}

Then, we leverage on this lemma to show that Algorithm \ref{alg:Algorithm} converges to a neighborhood of the optimal solution of the dual function.

\begin{lemma}
\label{lmm:DualGap}
Consider the dual updates of Algorithm \ref{alg:Algorithm} given by \eqref{eq:dualQueue} and \eqref{eq:dualBattery}, and let $\mbE\left[\|\mathbf{s}[t] \|^2| \boldsymbol{\lambda}[t]\right] \leq S^2$ be a bound on the second moment of the norm of the stochastic subgradients $\mathbf{s}[t]$. Assume that the dual variable $\boldsymbol{\lambda}[T]$ is given for an arbitrary time $T$ and define as $\boldsymbol{\lambda}_{\mathrm{best}}[t] := \argmin_{\boldsymbol{\lambda}[l]} g(\boldsymbol{\lambda}[l])$ the dual variable leading to the best value of the of the dual function for the interval $l \in [T,t]$. Then, we have
\begin{align}
\label{eq:lemma}
\lim\limits_{t \to \infty} g(\boldsymbol{\lambda}_{\mathrm{best}}[t]|\boldsymbol{\lambda}[T]) \leq g(\boldsymbol{\lambda}^{\star}) + \frac{S^2}{2} \quad \text{a.s.}
\end{align}
\end{lemma}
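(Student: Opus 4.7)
The plan is to convert the one-step descent inequality of Lemma \ref{lmm:avgDescentDual} into an almost-sure bound by a stopping-time argument, exploiting the fact that $g(\boldsymbol{\lambda}_{\mathrm{best}}[t])$ is non-increasing in $t$ since the minimum is taken over a growing window. Consequently, once the iterate $g(\boldsymbol{\lambda}[t])$ enters the ``good'' region $\{g\leq g(\boldsymbol{\lambda}^{\star})+S^2/2+\varepsilon\}$ even once, the best-so-far value is pinned at or below this level thereafter, and it suffices to show that this region is reached in finite time almost surely.

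Concretely, I would fix $\varepsilon>0$, define the stopping time $\tau_{\varepsilon}:=\inf\{t\geq T:\, g(\boldsymbol{\lambda}[t])\leq g(\boldsymbol{\lambda}^{\star})+S^2/2+\varepsilon\}$, and let $Z[t]:=\|\boldsymbol{\lambda}[t]-\boldsymbol{\lambda}^{\star}\|^2$. On the event $\{\tau_{\varepsilon}>t\}$, Lemma \ref{lmm:avgDescentDual} gives a conditional drift bounded above by $-2\varepsilon$ on $Z[t]$, because the term $-2(g(\boldsymbol{\lambda}[t])-g(\boldsymbol{\lambda}^{\star}))$ beats the $+S^2$ noise by $2\varepsilon$ before $\tau_{\varepsilon}$. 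This renders the stopped process $Z[t\wedge\tau_{\varepsilon}]+2\varepsilon(t\wedge\tau_{\varepsilon}-T)$ a nonnegative supermartingale with respect to the natural filtration of $\boldsymbol{\lambda}[\cdot]$; taking expectations, using $Z\geq 0$, and letting $t\to\infty$ by monotone convergence yields $\mbE[\tau_{\varepsilon}-T\,|\,\boldsymbol{\lambda}[T]]\leq \|\boldsymbol{\lambda}[T]-\boldsymbol{\lambda}^{\star}\|^2/(2\varepsilon)<\infty$, so $\tau_{\varepsilon}<\infty$ almost surely. Combined with the monotonicity of $g(\boldsymbol{\lambda}_{\mathrm{best}}[t])$, this gives $\limsup_{t\to\infty} g(\boldsymbol{\lambda}_{\mathrm{best}}[t])\leq g(\boldsymbol{\lambda}^{\star})+S^2/2+\varepsilon$ a.s., and intersecting over $\varepsilon=1/n$ (a countable union of null sets) delivers \eqref{eq:lemma}.

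The main obstacle is the almost-sure upgrade rather than the bound itself: telescoping Lemma \ref{lmm:avgDescentDual} across $l=T,\dots,t$ and dividing by $t-T$ gives the analogous statement in expectation for free, but the sample-path version requires the stopping-time construction above. One cannot invoke Robbins--Siegmund directly because the noise term $S^2$ is not summable along iterations; this is precisely why constant-step stochastic subgradient methods converge only to a neighborhood of $\boldsymbol{\lambda}^{\star}$, and why the $S^2/2$ gap appears in \eqref{eq:lemma}. A secondary subtlety is verifying that the localized drift inequality is valid on the event $\{\tau_{\varepsilon}>t\}$ so that the stopped process is a genuine supermartingale, which follows from the fact that $\tau_{\varepsilon}$ is a stopping time and that Lemma \ref{lmm:avgDescentDual} holds conditionally on $\boldsymbol{\lambda}[t]$ for every $t$.
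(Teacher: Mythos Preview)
Your argument is correct and follows essentially the same route as the paper: both localize the one-step descent inequality of Lemma~\ref{lmm:avgDescentDual} to the ``bad'' region $\{g(\boldsymbol{\lambda}[t])>g(\boldsymbol{\lambda}^{\star})+S^2/2\}$ and exploit the resulting nonnegative supermartingale. The only packaging difference is that the paper multiplies $\|\boldsymbol{\lambda}[t]-\boldsymbol{\lambda}^{\star}\|^{2}$ by the indicator $\mathbb{I}\{g(\boldsymbol{\lambda}_{\mathrm{best}}[t])-g(\boldsymbol{\lambda}^{\star})>S^2/2\}$ and invokes supermartingale convergence directly at the $S^2/2$ level (concluding via $\sum_t\beta[t]<\infty\Rightarrow\liminf\beta[t]=0$), whereas you introduce an $\varepsilon$-slack and an explicit stopping time $\tau_{\varepsilon}$ to force a strictly negative drift and then let $\varepsilon\downarrow 0$; your version is slightly cleaner at the boundary but otherwise equivalent.
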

\begin{proof}
For ease of exposition, let $T=0$. Then, define the stopped process $\alpha[t]$, tracking the distance between the dual variables at time $t$ and their optimal value, i.e., $\|\boldsymbol{\lambda}[t]-\boldsymbol{\lambda}^{\star}\|^2 $, until the optimality gap $g(\boldsymbol{\lambda}[t]) -g(\boldsymbol{\lambda}^{\star})$ falls below $S^2/2$. This expression is given by
\begin{align}
\label{eq:alpha}
\alpha [t] :=\|\boldsymbol{\lambda}[t]-\boldsymbol{\lambda}^{\star}\|^2
 \mathbb{I}\left\{ g(\boldsymbol{\lambda}_{\mathrm{best}}[t]) -g(\boldsymbol{\lambda}^{\star}) > S^2/2 \right\}.
\end{align}
where $\mathbb{I}\{\cdot\}$ denotes the indicator function. In a similar way, define the sequence $\beta [t]$ which follows $2\left( g(\boldsymbol{\lambda}[t]) -g(\boldsymbol{\lambda}^{\star})  \right)-S^2$ until the optimality gap $g(\boldsymbol{\lambda}[t]) -g(\boldsymbol{\lambda}^{\star})$ becomes smaller than $S^2/2$,
\begin{align}
\label{eq:beta}
\beta [t] :=\bigl(2\bigl(g(\boldsymbol{\lambda}[t]) -&g(\boldsymbol{\lambda}^{\star})  \bigr)-S^2\bigr) \nonumber\\
& \mathbb{I}\left\{ g(\boldsymbol{\lambda}_{\mathrm{best}}[t]) -g(\boldsymbol{\lambda}^{\star}) > S^2/2 \right\}.
\end{align}
Now, let $\mathcal{F}[t]$ be the filtration measuring $\alpha[t]$ and $\beta[t]$. Since $\alpha[t]$ and $\beta[t]$ are completely determined by $\boldsymbol{\lambda}[t]$, and $\boldsymbol{\lambda}[t]$ is a Markov process, conditioning on $\mathcal{F}[t]$ is equivalent to conditioning on $\boldsymbol{\lambda}[t]$. Hence, by application of Lemma \ref{lmm:avgDescentDual}, we can write $\mbE\left[ \alpha[t+1] | \mathcal{F}[t] \right] \leq \alpha[t] -\beta[t]$. Since by definitions \eqref{eq:alpha} and \eqref{eq:beta}, the processes $\alpha [t]$ and $\beta [t]$ are nonnegative, the sequence $\alpha [t]$ follows a  supermartingale expression. Then, by the supermartingale convergence theorem \cite[Theorem 5.2.9]{durrett2010probability}, the sequence $\alpha [t]$ converges almost surely, and the sum $\sum_{t=1}^{\infty}\beta[t] < \infty$ is almost surely finite. The latter implies that $\lim\inf_{t \to \infty} \beta[t] = 0$ almost surely. Given the definition of $\beta [t]$, this is implied by either of two events. (i) If the indicator function goes to zero, i.e., $g(\boldsymbol{\lambda}_{\mathrm{best}}[t]) -g(\boldsymbol{\lambda}^{\star}) \leq S^2/2$ for a large $t$; or (ii) $\lim\inf_{t \to \infty}2\bigl(g(\boldsymbol{\lambda}[t]) -g(\boldsymbol{\lambda}^{\star})  \bigr)-S^2 = 0$. From any of those events, expression \eqref{eq:lemma} follows.
\end{proof}

The convergence of the dual function as asserted in the previous lemma allows us to prove that the sequences of routing decision $\{r_{ij}^k[t]\}_{t=1}^{\infty}$ and auxiliary variables $\{x_i^k[t]\}_{t=1}^{\infty}$ generated by Algorithm \ref{alg:Algorithm}  are almost surely feasible.

\begin{proposition}[Auxiliary Feasibility]
\label{prop:AuxFeasibility}
Assume there exist strictly feasible primal variables $r_{ij}^k$ and $x_i^k$ such that $\sum_{j \in \mathcal{N}_i} r_{ij}^k - \sum_{j \in \mathcal{N}_i} r_{ji}^k-a^k_i + x_i^k > \xi$ and $e_i - \sum_{k \in \mathcal{K}} \sum_{j \in \mathcal{N}_i} r_{ij}^k > \xi$, for some $\xi > 0$. Then, the constraints \eqref{eq:optProblemConstraintQueue} and \eqref{eq:optProblemConstraintEnergy} are almost surely satisfied by Algorithm \ref{alg:Algorithm}.
\end{proposition}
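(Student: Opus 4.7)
The plan is to exploit the dual update as a running accumulator of constraint violations: if the slack variables of the algorithm stay sublinear, then the time averages of the primal iterates must satisfy the constraints. Concretely, telescoping the queue multiplier update \eqref{eq:dualQueue} and using that the projection $[\cdot]^+$ can only increase the unprojected value, one obtains
\begin{align}
\gamma_i^k[T] \;\geq\; \gamma_i^k[0] + \sum_{t=0}^{T-1}\left[a_i^k[t] - x_i^k[t] + \sum_{j\in\mathcal{N}_i}\bigl(r_{ji}^k[t] - r_{ij}^k[t]\bigr)\right].
\end{align}
Dividing by $T$, passing to the limit, and invoking the ergodicity of $a_i^k[t]$ shows that the time averages of the primal iterates satisfy \eqref{eq:optProblemConstraintQueue} almost surely provided $\gamma_i^k[T]/T \to 0$ a.s. An identical telescoping of \eqref{eq:dualBattery} reduces \eqref{eq:optProblemConstraintEnergy} to the claim $\beta_i[T]/T \to 0$ a.s.

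Next, I would use the strict feasibility assumption as a Slater condition to derive a coercivity bound on the dual function. Substituting the strictly feasible $(r_{ij}^k, x_i^k)$ into the Lagrangian \eqref{eq:Lagrangian} and using the inequalities in the hypothesis, one obtains
\begin{align}
g(\boldsymbol{\lambda}) \;\geq\; C_0 + \xi\,\|\boldsymbol{\lambda}\|_1 \qquad \text{for all } \boldsymbol{\lambda} \geq \mathbf{0},
\end{align}
where $C_0$ collects the terms independent of $\boldsymbol{\lambda}$. Combining this with Lemma \ref{lmm:DualGap}, which gives $\lim_{t\to\infty} g(\boldsymbol{\lambda}_{\mathrm{best}}[t]) \leq g(\boldsymbol{\lambda}^{\star}) + S^2/2$ almost surely, forces $\|\boldsymbol{\lambda}_{\mathrm{best}}[t]\|_1$ to remain uniformly bounded almost surely, so the dual iterates visit a bounded set infinitely often.

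The last step, which I expect to be the main obstacle, is to promote this to $\|\boldsymbol{\lambda}[T]\|/T \to 0$ a.s. The difficulty is that $\boldsymbol{\lambda}_{\mathrm{best}}[T]$ is only the running minimum, while $\boldsymbol{\lambda}[T]$ itself may drift. The plan is to distinguish two cases using the supermartingale $\alpha[t]$ constructed in the proof of Lemma \ref{lmm:DualGap}, which converges a.s. by the supermartingale convergence theorem. If the indicator in $\alpha[t]$ never vanishes, then $\|\boldsymbol{\lambda}[t]-\boldsymbol{\lambda}^{\star}\|^2$ converges almost surely, so $\boldsymbol{\lambda}[t]$ is bounded and $\boldsymbol{\lambda}[T]/T\to 0$ trivially. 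Otherwise the indicator vanishes from some random time onward; using Lemma \ref{lmm:avgDescentDual} to control $\mbE\|\boldsymbol{\lambda}[T]-\boldsymbol{\lambda}^{\star}\|^2 = O(T)$ together with the almost-sure boundedness of the per-step increments $\|\boldsymbol{\lambda}[t+1]-\boldsymbol{\lambda}[t]\| \leq \|\mathbf{s}[t]\|$ (guaranteed by the bounded arrivals $\bar{a}_i^k$, $e_i[t]$ and the routing constraint $\sum_{k,j} r_{ij}^k[t]\leq 1$), one shows via a Borel–Cantelli argument that $\|\boldsymbol{\lambda}[T]\|$ grows at most like $O(\sqrt{T}\log T)$ a.s., which is $o(T)$. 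Either way, $\gamma_i^k[T]/T \to 0$ and $\beta_i[T]/T \to 0$ a.s., and the feasibility of the time averages of $\{r_{ij}^k[t]\}$ and $\{x_i^k[t]\}$ in \eqref{eq:optProblemConstraintQueue} and \eqref{eq:optProblemConstraintEnergy} follows.
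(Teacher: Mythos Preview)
Your proposal shares the paper's three main ingredients: telescoping the dual update to relate $\boldsymbol{\lambda}[T]$ to the accumulated constraint slack, the coercivity bound $g(\boldsymbol{\lambda}) \geq C_0 + \xi\|\boldsymbol{\lambda}\|_1$ from strict feasibility, and invoking Lemma~\ref{lmm:DualGap}. The divergence is only in how you close the argument. The paper proceeds by contradiction: if the time-averaged constraints fail, the telescoping lower bound forces $\boldsymbol{\lambda}[t]$ to grow at least linearly for all large $t$; but coercivity combined with Lemma~\ref{lmm:DualGap} (which may be re-applied from any starting time $T$) guarantees an arbitrarily large time at which $\boldsymbol{\lambda}[t]$ lies in a fixed bounded set, yielding the contradiction. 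This sidesteps any direct statement about the growth rate of $\|\boldsymbol{\lambda}[T]\|$.

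Your Case~2 Borel--Cantelli step has a genuine gap. Iterating Lemma~\ref{lmm:avgDescentDual} does give $\mbE\bigl[\|\boldsymbol{\lambda}[T]-\boldsymbol{\lambda}^\star\|^2\bigr] = O(T)$, but Chebyshev then yields a tail bound of order $1/\log^2 T$ for the event $\{\|\boldsymbol{\lambda}[T]\| > \sqrt{T}\log T\}$, which is not summable over $T$. Passing to a geometric subsequence $T_n = 2^n$ makes it summable, but interpolating back to all $T$ using only the a.s.\ bounded increments $\|\mathbf{s}[t]\|\leq S_{\max}$ permits excursions of size $(T_{n+1}-T_n)S_{\max} = T_n S_{\max}$ between subsequence points, which is $O(T)$, not $o(T)$. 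Hence $\|\boldsymbol{\lambda}[T]\|/T \to 0$ cannot be concluded from these ingredients alone. The cleanest repair is precisely the paper's contradiction: you need only that $\boldsymbol{\lambda}[t]$ visits a bounded set for arbitrarily large $t$ --- which already follows from your coercivity bound together with Lemma~\ref{lmm:DualGap} --- not that $\boldsymbol{\lambda}[T]$ is sublinear for every $T$.
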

\begin{proof}
First, let us collect the feasible routing variables $r_{ij}^k$ and auxiliary variables $x_i^k$ in the vector $\hat{\mathbf{r}}=\{r_{ij}^k, x_i^k\}$. Then, if there exist strictly feasible variables $\hat{\mathbf{r}}$, we can bound the value of the dual function $g(\boldsymbol{\lambda})$ as follows. The dual function is defined as the maximum over primal variables $g(\boldsymbol{\lambda})=\max_{\mathbf{r}} \mathcal{L}(\mathbf{r},\boldsymbol{\lambda})$, hence $g(\boldsymbol{\lambda}) \geq \mathcal{L}(\hat{\mathbf{r}},\boldsymbol{\lambda})$. From this, by using the $\sum_{j \in \mathcal{N}_i} r_{ij}^k - \sum_{j \in \mathcal{N}_i} r_{ji}^k-a^k_i + x_i^k > \xi$ and $e_i - \sum_{k \in \mathcal{K}} \sum_{j \in \mathcal{N}_i} r_{ij}^k > \xi$ terms we establish the following bound
\begin{align} 
g(\boldsymbol{\lambda}) \geq 
	\sum_{i \in \mathcal{N}} \sum_{k \in \mathcal{K}}  \sum_{j \in \mathcal{N}_i}   f_{ij}^k \left( r_{ij}^k \right)  - \sum_{i \in \mathcal{N}} \sum_{k \in \mathcal{K}} \bar{\gamma}_i^{k} x_i^k +
	\xi \boldsymbol{\lambda}^T\mathbf{1}.
\end{align}
Then, by simply reordering terms we obtain the following upper bound on the dual variables
\begin{align} 
\boldsymbol{\lambda} \leq \frac{1}{\xi} \biggl(g(\boldsymbol{\lambda})-\sum_{i \in \mathcal{N}} \sum_{k \in \mathcal{K}} \bigl( \sum_{j \in \mathcal{N}_i}   f_{ij}^k \left( r_{ij}^k \right)  + \bar{\gamma}_i^{k} x_i^k \bigr) \biggr).
\end{align}
Lemma \ref{lmm:DualGap} certifies the existence of a time $t \geq T_0$ for which $g(\boldsymbol{\lambda}[t]) \leq g(\boldsymbol{\lambda}^{\star}) + S^2/2$. Hence,
\begin{align} 
\boldsymbol{\lambda}[t] \leq \frac{1}{\xi} \biggl( g(\boldsymbol{\lambda}^{\star}) + \frac{S^2}{2} -\sum_{i \in \mathcal{N}} \sum_{k \in \mathcal{K}}  \bigl( \sum_{j \in \mathcal{N}_i}   f_{ij}^k \left( r_{ij}^k \right)  + \bar{\gamma}_i^{k} x_i^k \bigr) \biggr)
	\label{eq:lambdaUpperBound} 
\end{align}
for $t \geq T_0$. Now, recall that the feasibility conditions \eqref{eq:optProblemConstraintQueue} and \eqref{eq:optProblemConstraintEnergy} are given by the limits
\begin{align} 
\lim_{t \to \infty}\frac{1}{t} \sum\limits_{l=1}^{t}
\biggl(
\sum_{j \in \mathcal{N}_i} \bigl( r_{ij}^k[l] - r_{ji}^k[l] \bigr)-a^k_i[l]+x_i^k[t]
\biggr)\geq 0
\label{eq:feasQueue} \\
\lim_{t \to \infty}\frac{1}{t} \sum\limits_{l=1}^{t}
\biggl(
e_i[l] - \sum_{k \in \mathcal{K}} \sum_{j \in \mathcal{N}_i} r_{ij}^k[l] \biggr) \geq 0
\label{eq:feasBattery}
\end{align}
which, by recalling that the constraints are simply the stochastic subgradients $\mathbf{s}[t]$ of the problem, they can also be written in compact form as $\lim_{t \to \infty}\frac{1}{t} \sum_{l=1}^{t}\mathbf{s}[l] \geq 0$. Now, consider the dual updates \eqref{eq:dualQueue} and \eqref{eq:dualBattery} given by $\boldsymbol{\lambda}[t+1]=\left[\boldsymbol{\lambda}[t]-\mathbf{s}[t]\right]^+$. Since the $[\cdot]^+$ operator corresponds to a nonnegative projection, the dual variables can be lower bounded by removing the projection and recursively substituting the updates
\begin{align} 
\boldsymbol{\lambda}[t+1]\geq \boldsymbol{\lambda}[t]-\mathbf{s}[t] \geq \boldsymbol{\lambda}[1] -\sum_{l=1}^{t}\mathbf{s}[l] \geq - \sum_{l=1}^{t}\mathbf{s}[l].
\label{eq:lambdaLowerBound}
\end{align}
To prove almost sure feasibility, we will follow by contradiction. First, assume that conditions \eqref{eq:feasQueue} and  \eqref{eq:feasBattery} are infeasible. In compact form, this means the existence of a time $t \geq T_1$, for which there is a constant $\delta > 0$  such that $\frac{1}{t} \sum_{l=1}^{t}\mathbf{s}[l] \leq -\delta$. By substituting in \eqref{eq:lambdaLowerBound}, we have that the dual variables are lower bounded by $\boldsymbol{\lambda}[t+1]\geq \delta t$. Now, we can freely choose a time $t \geq T_2$ such that
\begin{align} 
\boldsymbol{\lambda}[t] > \frac{1}{\xi} \biggl( g(\boldsymbol{\lambda}^{\star}) + \frac{S^2}{2} -\sum_{i \in \mathcal{N}} \sum_{k \in \mathcal{K}} \bigl( \sum_{j \in \mathcal{N}_i}   f_{ij}^k \left( r_{ij}^k \right)  + \bar{\gamma}_i^{k}  x_i^k \bigr)\biggr)
\end{align}
for all $t \geq T_2$. However, this contradicts the upper bound established in \eqref{eq:lambdaUpperBound}. This means that there do not exist sequences generated by Algorithm \ref{alg:Algorithm} such that \eqref{eq:feasQueue} and \eqref{eq:feasBattery} are not satisfied. Therefore, the constraints \eqref{eq:optProblemConstraintQueue} and \eqref{eq:optProblemConstraintEnergy} are satisfied almost surely. 
\end{proof}

Finally, it suffices to show that if the optimal dual variables are upper bounded by the constants $\bar{\gamma}_i^{k}$, the system satisfies the original problem without the auxiliary variable. Thus, satisfying the original constraint \eqref{eq:optProblemConstraintQueueOriginal} and hence the queue stability condition \eqref{eq:queueStability}.

\begin{proposition}[Feasibility]
\label{prop:Feasibility}
Assume that there exist strictly feasible routing variables $r_{ij}^k$ such that $\sum_{j \in \mathcal{N}_i} r_{ij}^k - \sum_{j \in \mathcal{N}_i} r_{ji}^k-a^k_i > \xi$ and $e_i - \sum_{k \in \mathcal{K}} \sum_{j \in \mathcal{N}_i} r_{ij}^k > \xi$, for some $\xi > 0$. Furthermore, assume the optimal Lagrange multipliers of the queue stability constraints satisfy $\gamma_i^{k,\star} \leq \bar{\gamma}_i^{k}$. Then, the constraints \eqref{eq:optProblemConstraintQueueOriginal} and \eqref{eq:optProblemConstraintEnergyOriginal} are almost surely satisfied by Algorithm \ref{alg:Algorithm}.
\end{proposition}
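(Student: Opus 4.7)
The plan is to reduce Proposition \ref{prop:Feasibility} to Proposition \ref{prop:AuxFeasibility} by showing that, under the extra hypothesis $\gamma_i^{k,\star}\leq\bar{\gamma}_i^{k}$, the time-averaged auxiliary variable vanishes almost surely. Observe first that the energy constraint \eqref{eq:optProblemConstraintEnergyOriginal} is identical to \eqref{eq:optProblemConstraintEnergy}, so Proposition \ref{prop:AuxFeasibility} already yields its almost-sure satisfaction. For the queue constraint, Proposition \ref{prop:AuxFeasibility} gives
\begin{align}
\lim_{t\to\infty}\frac{1}{t}\sum_{l=1}^{t}\biggl(\sum_{j\in\mathcal{N}_i}\bigl(r_{ij}^k[l]-r_{ji}^k[l]\bigr)-a_i^k[l]+x_i^k[l]\biggr)\geq 0 \quad\text{a.s.,}\nonumber
\end{align}
so recovering \eqref{eq:optProblemConstraintQueueOriginal} amounts to establishing $\lim_{t\to\infty}(1/t)\sum_{l=1}^{t}x_i^k[l]=0$ almost surely.

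The reduction hinges on the threshold rule \eqref{eq:dualFunctionMaxAuxiliary}: since $\gamma_i^{k,\star}-\bar{\gamma}_i^{k}\leq 0$, the primal maximizer is $x_i^{k,\star}=0$, and the strictly feasible routing variables $\hat{r}_{ij}^{k}$ of the original problem, paired with $\hat{x}_i^{k}=0$, form a strictly feasible point of the modified problem with the same slack $\xi>0$. Feeding this point into the upper bound \eqref{eq:lambdaUpperBound} developed inside the proof of Proposition \ref{prop:AuxFeasibility} yields a uniform bound $\gamma_i^{k}[t]\leq C$ for all $t\geq T_{0}$, where $C$ depends only on $\xi$, $S^{2}$, $g(\boldsymbol{\lambda}^{\star})$, and $\sum_{ijk}f_{ij}^{k}(\hat{r}_{ij}^{k})$.

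From here I would argue by contradiction. Suppose \eqref{eq:optProblemConstraintQueueOriginal} fails on a set of positive probability, so that $\frac{1}{t}\sum_{l=1}^{t}\bigl(a_i^{k}[l]-\sum_{j}(r_{ij}^{k}[l]-r_{ji}^{k}[l])\bigr)\geq\delta$ for all large $t$ and some $\delta>0$. Telescoping the dual recursion \eqref{eq:dualQueue} with the nonexpansive $[\,\cdot\,]^{+}$ removed yields
\begin{align}
\gamma_i^{k}[t+1]\geq\gamma_i^{k}[1]+\sum_{l=1}^{t}\biggl(a_i^{k}[l]-x_i^{k}[l]-\sum_{j}\bigl(r_{ij}^{k}[l]-r_{ji}^{k}[l]\bigr)\biggr),\nonumber
\end{align}
and combining this with the bound $\gamma_i^{k}[t+1]\leq C$ forces $\frac{1}{t}\sum_{l=1}^{t}x_i^{k}[l]\geq\delta-O(1/t)$. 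By the threshold structure of \eqref{eq:dualFunctionMaxAuxiliary} this implies $\gamma_i^{k}[l]>\bar{\gamma}_i^{k}$ on a set of positive density in time.

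The hard step is then to turn these persistent excursions into a contradiction with Lemma \ref{lmm:DualGap}. Whenever $\gamma_i^{k}[l]>\bar{\gamma}_i^{k}$, the $x$-block of the Lagrangian maximization contributes the strictly positive term $\bar{x}_i^{k}(\gamma_i^{k}[l]-\bar{\gamma}_i^{k})$ to $g(\boldsymbol{\lambda}[l])$, creating a uniform gap $g(\boldsymbol{\lambda}[l])\geq g(\boldsymbol{\lambda}^{\star})+\eta$ on the excursion set. The delicate part is that Lemma \ref{lmm:DualGap} controls only $g(\boldsymbol{\lambda}_{\mathrm{best}}[t])$, not $g(\boldsymbol{\lambda}[t])$; bridging this gap requires arguing that a positive-density excursion set forces even the running minimum of the dual function to stay bounded away from $g(\boldsymbol{\lambda}^{\star})+S^{2}/2$, which then contradicts \eqref{eq:lemma}. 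I expect this to be the main obstacle, and the cleanest route is likely to revisit the supermartingale construction behind Lemma \ref{lmm:DualGap} with the refined event $\{\gamma_i^{k}[l]>\bar{\gamma}_i^{k}\}$ so that the same convergence argument rules out a persistent excursion.
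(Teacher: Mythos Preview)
Your reduction is on target: the energy constraint is already \eqref{eq:optProblemConstraintEnergy}, so Proposition \ref{prop:AuxFeasibility} handles it, and the only task is to show that $\lim_{t\to\infty}(1/t)\sum_{l=1}^{t}x_i^{k}[l]=0$ almost surely so that the auxiliary term drops out of the queue constraint. You also correctly identify that the optimal primal slack is $x_i^{k,\star}=0$ under the hypothesis $\gamma_i^{k,\star}\leq\bar{\gamma}_i^{k}$.

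Where your argument diverges from the paper is in how to pass from $x_i^{k,\star}=0$ to the vanishing of the time average. You attempt a contradiction via persistent excursions of $\gamma_i^{k}[l]$ above $\bar{\gamma}_i^{k}$ and then try to feed this back into Lemma \ref{lmm:DualGap}. As you yourself note, this hits a real obstacle: Lemma \ref{lmm:DualGap} only controls $g(\boldsymbol{\lambda}_{\mathrm{best}}[t])$, not the running values $g(\boldsymbol{\lambda}[t])$, so a positive-density excursion set does not immediately contradict \eqref{eq:lemma}. Your suggested fix---rerunning the supermartingale argument with the refined event $\{\gamma_i^{k}[l]>\bar{\gamma}_i^{k}\}$---is plausible but not carried out, and it is not obvious that the stopped process with this new indicator retains the supermartingale structure needed.

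The paper avoids this machinery entirely. It compares the Lagrangians of the modified problem \eqref{eq:optProblem} and the original problem \eqref{eq:optProblemOriginal} directly: their difference is $\sum_{i,k}(-\bar{\gamma}_i^{k}+\gamma_i^{k}+\theta_i^{k}-\nu_i^{k})x_i^{k}+\sum_{i,k}\nu_i^{k}\bar{x}_i^{k}$, where $\theta_i^{k},\nu_i^{k}\geq 0$ are the multipliers for $x_i^{k}\geq 0$ and $x_i^{k}\leq\bar{x}_i^{k}$. The stationarity condition $\bar{\gamma}_i^{k}-\gamma_i^{k}-\theta_i^{k}+\nu_i^{k}=0$ can be met with $\nu_i^{k,\star}=0$ precisely because $\gamma_i^{k,\star}\leq\bar{\gamma}_i^{k}$, so the two problems share the same optimal solution and in particular $x_i^{k,\star}=0$. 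The paper then invokes the ergodic primal-average convergence of stochastic dual descent \cite{ribeiro2010ergodic}, which gives $\lim_{t\to\infty}(1/t)\sum_{l=1}^{t}x_i^{k}[l]=x_i^{k,\star}=0$ directly, and Proposition \ref{prop:AuxFeasibility} finishes the job. In short, the missing ingredient in your proposal is the ergodic convergence of primal time averages to the optimal primal; once you have that, the entire excursion-set contradiction becomes unnecessary.
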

\begin{proof}
Take the difference between the Lagrangian \eqref{eq:Lagrangian} of the optimization problem with the auxiliary variable \eqref{eq:optProblem} and the original problem \eqref{eq:optProblemOriginal}. The difference between them is given by 
\begin{align}
\mathcal{L}(\mathbf{r},\boldsymbol{\lambda}) - \hat{\mathcal{L}}(\mathbf{r},\boldsymbol{\lambda})&=
\sum_{i \in \mathcal{N}} \sum_{k \in \mathcal{K}} \bigl(- \bar{\gamma}_i^{k} + \gamma_i^k + \theta_i^k - \nu_i^k \bigr) x_i^k \nonumber\\
&+\sum_{i \in \mathcal{N}} \sum_{k \in \mathcal{K}} \nu_i^k \bar{x}_i^k 
\label{eq:LagrangianDiff}
\end{align}
where $\theta_i^k \geq 0$ and $\nu_i^k \geq 0$ are the Lagrange multipliers of the $x_i^k \geq 0$ and $x_i^k \leq \bar{x}_i^k$ constraints, respectively. In order for both problems to be equivalent, the minimization of \eqref{eq:LagrangianDiff}, which is the solution of the dual problem, must be zero. This implies the existence of Lagrange multipliers satisfying the constraints $\bar{\gamma}_i^{k} - \gamma_i^k - \theta_i^k + \nu_i^k =0$, for all $i \in \mathcal{N}$ and $k \in \mathcal{K}$. Since $\gamma_i^{k,\star} \leq \bar{\gamma}_i^{k}$, the constraints can be satisfied by letting $\nu_i^{k,\star} = 0$, and $\theta_i^{k,\star}$ acting as a slack variable. Then, $\mathcal{L}(\mathbf{r},\boldsymbol{\lambda}) - \hat{\mathcal{L}}(\mathbf{r},\boldsymbol{\lambda})=0$, which implies that the optimal solution of both problems is the same. Since $\lim_{t \to \infty}\frac{1}{t} \sum_{l=1}^{t}x_i^k[l]=x_i^k$ \cite{ribeiro2010ergodic} and $x_i^k=0$, by Proposition \ref{prop:AuxFeasibility} the routing variables $r_{ij}^k$ of Algorithm \ref{alg:Algorithm} satisfy the constraint $\sum_{j \in \mathcal{N}_i} r_{ij}^k - \sum_{j \in \mathcal{N}_i} r_{ji}^k-a^k_i \geq 0$. 
\end{proof}

\begin{corollary}[Queue Stability]
\label{cor:QueueStability}
Consider the conditions of Proposition \ref{prop:Feasibility}.  Then, there exists a constant $Q$ such that for some arbitrary time $T$ the queues of the system under Algorithm \ref{alg:Algorithm} satisfy $\Pr{\max_{t \geq T} \|\mathbf{q}[t]\| \leq Q | \mathbf{q}[T]}=1$.
\end{corollary}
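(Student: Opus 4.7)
The plan is to chain Proposition~\ref{prop:Feasibility} with the martingale argument already sketched in Section~\ref{sec:SystemModel}, namely the observation that strict feasibility of the long-run routing rates turns the queue recursion \eqref{eq:queueEvol} into a supermartingale whose convergence is handed to us by \cite[Theorem 5.2.9]{durrett2010probability}. So the first step is simply to invoke Proposition~\ref{prop:Feasibility}: under the hypothesis that there exist $r_{ij}^k$ with slack $\xi > 0$ in both \eqref{eq:optProblemConstraintQueueOriginal} and \eqref{eq:optProblemConstraintEnergyOriginal}, and since $\gamma_i^{k,\star} \leq \bar{\gamma}_i^{k}$, the sequences produced by Algorithm~\ref{alg:Algorithm} satisfy \eqref{eq:optProblemConstraintQueueOriginal} almost surely, i.e., $\lim_{t\to\infty} (1/t) \sum_{l=1}^{t}\bigl(\sum_{j\in\mathcal{N}_i}(r_{ij}^k[l]-r_{ji}^k[l]) - a_i^k[l]\bigr) \geq 0$ a.s.

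Second, I would leverage the strict slack. Revisiting the contradiction argument in Proposition~\ref{prop:AuxFeasibility}: if the averaged drift were not bounded away from zero by some $\delta>0$, the dual variables would grow without bound, contradicting the deterministic upper bound \eqref{eq:lambdaUpperBound}. Thus, for the components corresponding to the queue constraints, there is a (random but a.s.\ finite) time $T$ after which $(1/t)\sum_{l=1}^{t}\bigl(\sum_{j}(r_{ij}^k[l]-r_{ji}^k[l]) - a_i^k[l]\bigr) \geq \delta$. Feeding this into the pointwise queue evolution \eqref{eq:queueEvol}, the drift $\mathbb{E}\bigl[q_i^k[t+1]-q_i^k[t]\bigm| \mathcal{F}[t]\bigr]$ averaged over time is strictly negative; combined with the uniform boundedness of the one-step increments (arrivals are capped by $\bar{a}_i^k$ as per Proposition~\ref{prop:EnergyCausality} and at most one packet per node is routed per slot), the process $q_i^k[t]$ on $t\geq T$ is a nonnegative supermartingale.

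Third, apply the supermartingale convergence theorem on each of the finitely many queues $q_i^k[t]$ to obtain an a.s.\ finite limit $q_i^{k,\infty}$. Taking $Q := \sqrt{\sum_{i,k} (q_i^{k,\infty}+c)^2}$ for any positive constant $c$ that dominates the residual fluctuations (which exist by Doob's maximal inequality applied to the supermartingale) yields $\Pr\{\max_{t\geq T}\|\mathbf{q}[t]\|\leq Q\mid \mathbf{q}[T]\}=1$, which is exactly \eqref{eq:queueStability}.

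The main obstacle I expect is the second step: Proposition~\ref{prop:Feasibility} by itself only gives ergodic feasibility of the routing variables and does not immediately produce the pathwise negative-drift condition one needs for a Lyapunov/supermartingale argument. The cleanest way around this is to extract the \emph{strict} slack $\delta$ by pushing the contradiction machinery of Proposition~\ref{prop:AuxFeasibility} from ``feasibility'' to ``feasibility with a margin''; equivalently, one can replace the feasibility threshold $0$ in the indicator defining $\alpha[t]$ by a small positive constant absorbed into the dual function bound. Once that margin is in hand, the supermartingale step is standard.
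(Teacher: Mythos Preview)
Your approach is essentially the paper's own: invoke Proposition~\ref{prop:Feasibility} to get feasibility of \eqref{eq:optProblemConstraintQueueOriginal}, declare the queue recursion \eqref{eq:queueEvol} a nonnegative supermartingale, and apply \cite[Theorem 5.2.9]{durrett2010probability}. The paper's proof is in fact briefer than yours: it passes directly from the ergodic inequality $\sum_{j}r_{ij}^k-\sum_{j}r_{ji}^k-a_i^k\geq 0$ to the one-step conditional statement $\mbE[q_i^k[t+1]\mid\mathcal{F}_i^k[t]]\leq q_i^k[t]$ without any intermediate ``strict slack'' extraction, and then concludes.

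The obstacle you flag in your final paragraph is real, and it is present in the paper as well: an ergodic (time-average) inequality on the routing variables does not by itself deliver the pointwise conditional-expectation inequality that defines a supermartingale, because $r_{ij}^k[t]$ depends on the entire dual-variable history and is not stationary. Your proposed patch---revisiting the contradiction argument of Proposition~\ref{prop:AuxFeasibility} to upgrade feasibility to feasibility-with-margin $\delta>0$---is a sensible direction, though note that even a strict time-averaged drift does not literally make $q_i^k[t]$ a supermartingale in the classical sense; what it gives you is that the cumulative increment $q_i^k[t]-q_i^k[T]=\sum_{l=T}^{t-1}\bigl(a_i^k[l]+\sum_j r_{ji}^k[l]-\sum_j r_{ij}^k[l]\bigr)$ is eventually bounded above pathwise, which together with nonnegativity and bounded increments yields the a.s.\ boundedness claim \eqref{eq:queueStability} directly, without needing the supermartingale convergence theorem at all. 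That route is arguably cleaner than forcing the argument through a supermartingale that the process is not quite.
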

\begin{proof}
Denote by $\mathcal{F}_i^k[t]$ the filtration measuring $q_i^k[l]$. Then, since the routing variables $r_{ij}^k$ generated by Algorithm \ref{alg:Algorithm} satisfy $\sum_{j \in \mathcal{N}_i} r_{ij}^k - \sum_{j \in \mathcal{N}_i} r_{ji}^k-a^k_i \geq 0$, the queue evolution \eqref{eq:queueEvol} obeys the supermartingale expression $\mbE \left[q_i^k[t+1] | \mathcal{F}_i^k[t]\right] \leq q_i^k[t]$. By the supermartingale convergence theorem \cite[Theorem 5.2.9]{durrett2010probability}, the sequence $q_i^k[t]$ converges almost surely, therefore satisfying the stability condition $\Pr { \max_{t \geq T} \|\mathbf{q}[t]\| \leq Q | \mathbf{q}[T]}=1$.
\end{proof}

Given an appropriate choice of $\bar{\gamma}_i^{k}$ and feasible data and energy arrivals, Proposition \ref{prop:Feasibility} guarantees that the nodes route in average as many packets as they receive from neighbors and the arrival process (i.e., the constraint \eqref{eq:optProblemConstraintQueueOriginal} is satisfied). Then, Corollary \ref{cor:QueueStability} shows that this implies that the queues themselves are almost surely stable.

\begin{remark} \label{rmk_2}
As discussed in Remark \ref{rmk_1} the variables $x_i^k[t]$ are used to enforce causality of energy consumption. Doing this requires preventing transmissions when the battery is empty, something that is challenging because transmission decisions are largely based on the multipliers associated with queue stability constraints [cf. \eqref{eq:dualFunctionMax}]. The role of the variables $x_i^k[t]$ is to set $\gamma_i^k[t] = 0$ whenever they go over the threshold $\bar{\gamma}_i^{k}$, hence allowing us to establish the causality result in Proposition \ref{prop:EnergyCausality}. In a sense, this fools the agent into thinking that its queues are empty, thereby preventing transmissions when the node is out of energy. The drawback is that neighboring agents also see the queues as empty and decide to route traffic through agent $i$ when agent $i$ does not have the ability to handle such traffic. The convergence analysis implies that this short term drawback is not a problem in the long term because queues eventually stabilize with probability 1. Intuitively, this happens because the resetting of queues is rare, something that we indeed observe in numerical analyses; see Figure \ref{fig:QueuedPackets}.
\end{remark}

\begin{remark}
To deal with the causality problem we have resorted to the auxiliary formulation shown in \eqref{eq:optProblem}. This modified formulation can be seen as a way to bound the values of the dual variables. Hence, one might think that a more straightforward approach would be to simply project the dual variables to a restricted domain and remove the auxiliary variables $x_i^k[t]$ altogether from the optimization problem. We studied the consequences of this formulation in our previous work \cite{calvo2017stochastic}. The main drawback of using a dual projection is that the maximum value of the projection $\boldsymbol{\lambda}^{\max} $ must be lower bounded by
\begin{align}
\boldsymbol{\lambda}^{\max} >
\frac{1}{\xi}
\left( g(\boldsymbol{\lambda}^{\star}) + \frac{S^2}{2} -\sum_{i \in \mathcal{N}} \sum_{k \in \mathcal{K}}  \sum_{j \in \mathcal{N}_i}   f_{ij}^k \left( r_{ij}^k \right) \right).
\end{align}
This is a very loose requirement, which does not provide the clear theoretical guarantees that our proposed formulation offers. Nonetheless, the numerical results provided in \cite{calvo2017stochastic} seem to indicate that simply having the optimal dual variables in the range $[0,\boldsymbol{\lambda}^{\max}]$ can be sufficient. Thus, the simpler use of a projection of the dual variables can also be used as a less theoretically robust but practical option.
\end{remark}

\section{Numerical Results}
\label{sec:NumericalResults}

\begin{figure}[t]
    \centering
    \includegraphics[scale=1]{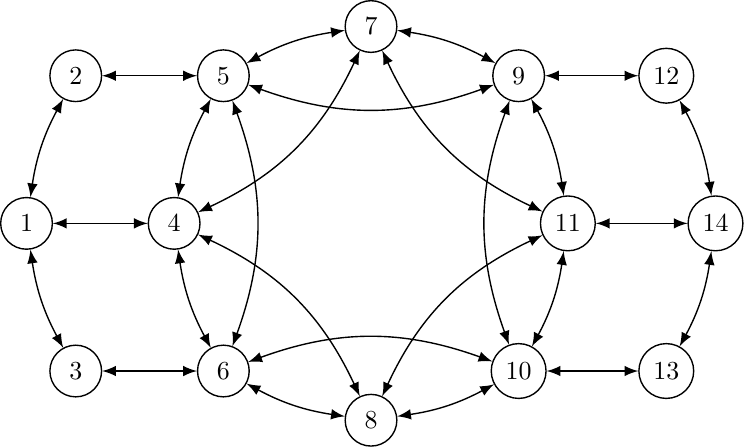}
    \caption{Connectivity graph of the simulated network.}
    \label{fig:Network}
\end{figure}

In this section, we conduct numerical experiments aimed at evaluating the performance of the proposed SBP-EH and SSBP-EH policies. As a means of comparison, when indicated, we also provide the non-energy harvesting counterparts of our proposed policies. Namely, the Stochastic Backpressure (SBP) \cite{tassiulas1992stability} and Stochastic Soft Backpressure (SSBP) \cite{ribeiro2009stochastic} policies. These policies correspond to solving \eqref{eq:optProblemNoEH}, the original optimization problem without the energy harvesting constraints, with the objective functions shown in Sections \ref{subsec:SBP-EH} and \ref{subsec:SSBP-EH}, respectively. Hence, these policies assume the availability of an unlimited energy supply. We consider the communication network shown in Figure \ref{fig:Network}, where we let nodes 1 and 14 act as sink nodes and the rest of the nodes support a single flow with packet arrival rates of $a_i^k=0.35$ packets per time slot. Moreover, we consider the nodes to be harvesting energy at a rate of $e_i=1$ units of energy per time slot and storing it in a battery of capacity $b_i^{\max}=15$. Furthermore, we set the routing weights to $w_{ij}^k=0$, and let $\bar{\gamma}_i^{k} = 10$.

\begin{figure}[t]
    \includegraphics[width=\columnwidth]{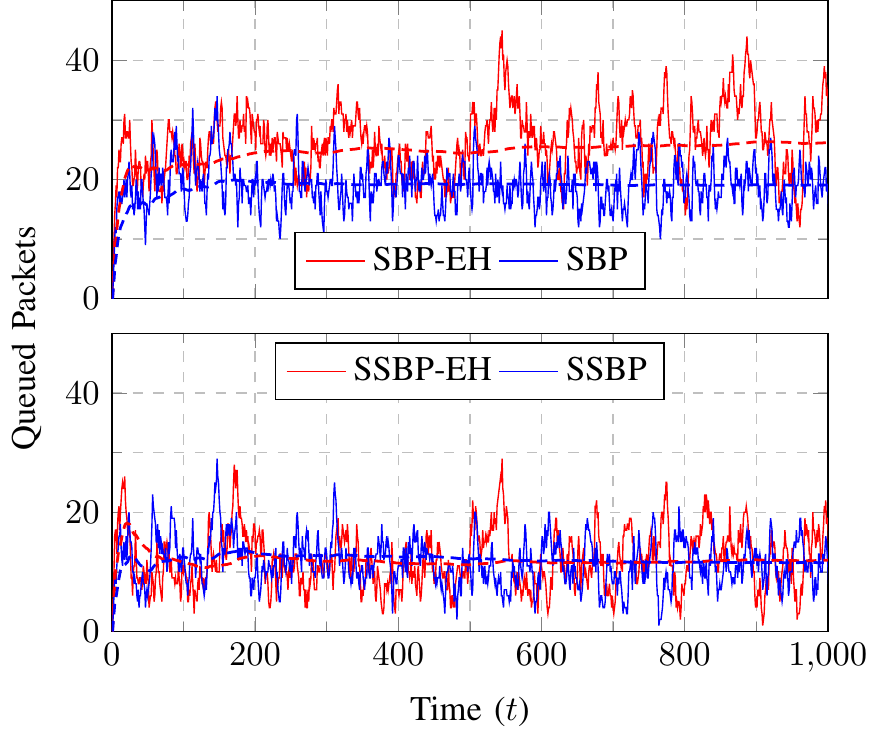}
	\caption{Total amount of packets queued in the network at each time slot. Average values are shown in dashed lines.}
\label{fig:QueuedPackets}
\end{figure}

\subsection{Network Queues}

First, we plot in Figure \ref{fig:QueuedPackets} a sample path of the total number of queued packets in the network as a function of the elapsed time. As expected, all the policies are capable of stabilizing the queues in the network. Due to the random nature of the processes, it is difficult to say exactly at which point stabilization occurs. Nonetheless, for the SBP and SBP-EH policies, the data queues seem to stop growing after around $t=200$ time slots. In the case of the SSBP and SSBP-EH policies, stabilization occurs much more rapidly rapidly, with less than $t=100$ time slots necessary to obtain stability. Also, both soft policies (SSBP and SSBP-EH) stabilize the queues with a lower number of average queued packets than their counterpart non-soft policies (SBP and SBP-EH). Namely, at $t=1000$, the average queued packets are $19.08$ for SBP and $26.22$ for SBP-EH. In the case of the soft policies, these numbers are much smaller, with $11.55$ and $11.97$ packets for SSBP and SSBP-EH, respectively. This also shows that the gap between the SSBP and SSBP-EH policies seems to vanish asymptotically ($3.63\%$ at $t=1000$), while this is not the case for the non-soft policies (a gap of $37.42\%$ at $t=1000$). This occurs due to the fact that the SBP and SBP-EH policies choose their routing policy by maximizing the difference between queue multipliers. Hence, making the decision indifferent to the actual value of the multipliers as long as their differences stay the same. For the SSBP and SSBP-EH policies, this situation does not occur due to their randomized nature. Hence, pushing for lower average queued packets. Furthermore, since the data arrivals can be sustained by the energy harvesting process, the SSBP-EH policy tries to get as close a the non-EH one, leading to the small of the gap. Also, note that the SBP-EH and SSBP-EH policies are more volatile than their non-EH counterparts. For example, around $t=550$, the number of queued packets spikes for the energy harvesting policies, which is not the case in the non-EH ones. These types of spikes arise due to a certain lack of energy around those time instants.

\begin{figure}[t!]
    \centering
    \subfigure[SBP-EH.]
    {
    	\includegraphics[scale=0.95]{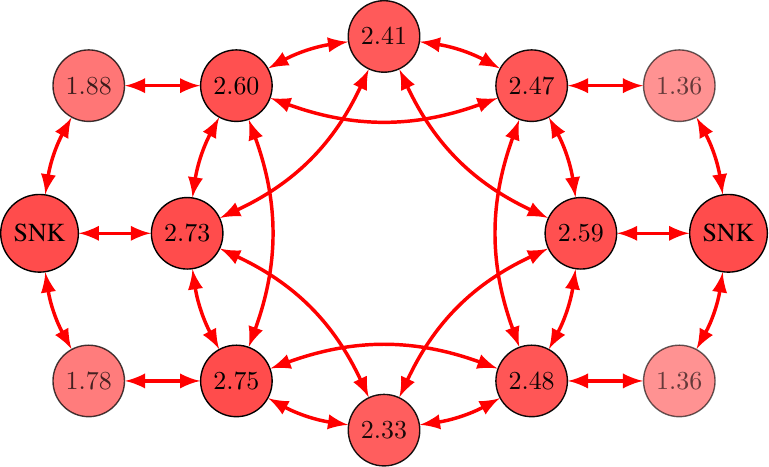}
        \label{fig:queuesAtNodesSBP}
    }
    \subfigure[SSBP-EH.]
    {
 	   \includegraphics[scale=0.95]{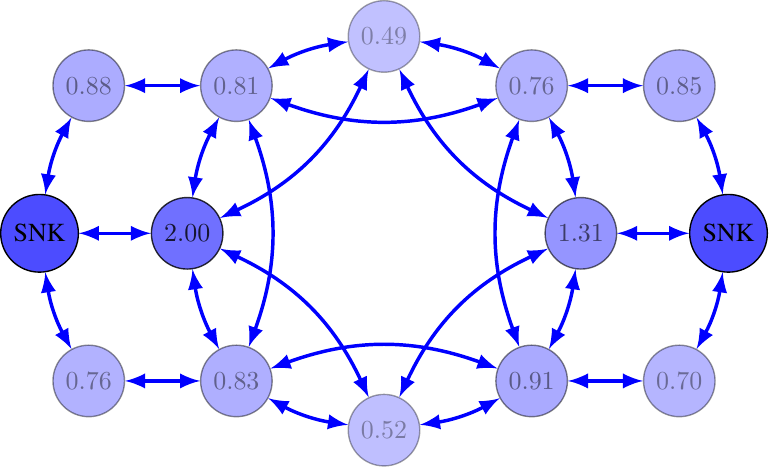}
        \label{fig:queuesAtNodesSSBP}
    }
    \caption{Average data queues at each node in the network.}
    \label{fig:queuesAtNodes}
\end{figure}

In Figure \ref{fig:queuesAtNodes} we have plotted the average queued packets at each node for the SBP-EH and SSBP-EH policies. In general, SSBP-EH shows a lower number of average queued packets over all the nodes and the improvements are more significant the lower the pressure the node supports. This tends to translate to better improvements for nodes far away from a sink that tend to be routed less traffic. For example, the nodes $7$ and $8$ (See Fig. \ref{fig:Network}), which are the furthest away from any sink, show a reduction of $1.92$ and $1.81$ average packets, respectively, when using SSBP-EH. The rest of the nodes also show significant improvements when using SSBP-EH. Nodes $5$, $6$, $9$ and $10$, all lying at two hops of distance of a sink are more critical for accessing a sink, as having them congested blocks the access to the sink of the previous nodes $7$ and $8$. In this case, the improvements range from $1.57$ to $1.79$ average data packets. Finally, there are the nodes that lie at one hop distance from any sink (nodes $2$, $3$, $4$, $11$, $12$ and $13$) . These nodes sustain a significant amount of traffic and show improvements ranging from $0.51$ to $1.78$. With the nodes with the highest traffic, nodes $4$ and $11$,  improving by $0.73$ and $1.28$ data packets, respectively.

The differences between SBP-EH and SSBP-EH are also evidenced in terms of their energy use. In Figure \ref{fig:NetworkEnergy} we plot the total energy in the network at a given time slot for both the SBP-EH and the SSBP-EH policies. On one hand, this figure illustrates the high variability in the energy supply due to the energy harvesting process. On the other hand, the SSBP-EH policy is shown to be more aggressive in its energy use. Also, note that drops in total network energy are not necessarily correlated with increases in queued packets in the network. For example, the previously noticed peak of queued data packets at $t=550$ in Fig. \ref{fig:QueuedPackets} does not have an equivalent large drop in network energy. This is due to the fact that it is better for energy in the network to overall be lower than to have a specific high-pressure node have an energy shortage. In general, spikes in queued data packets tend to occur when a specific route becomes blocked by the temporary lack of energy. 

\begin{figure}[t]
	\centering
    \includegraphics[width=\columnwidth]{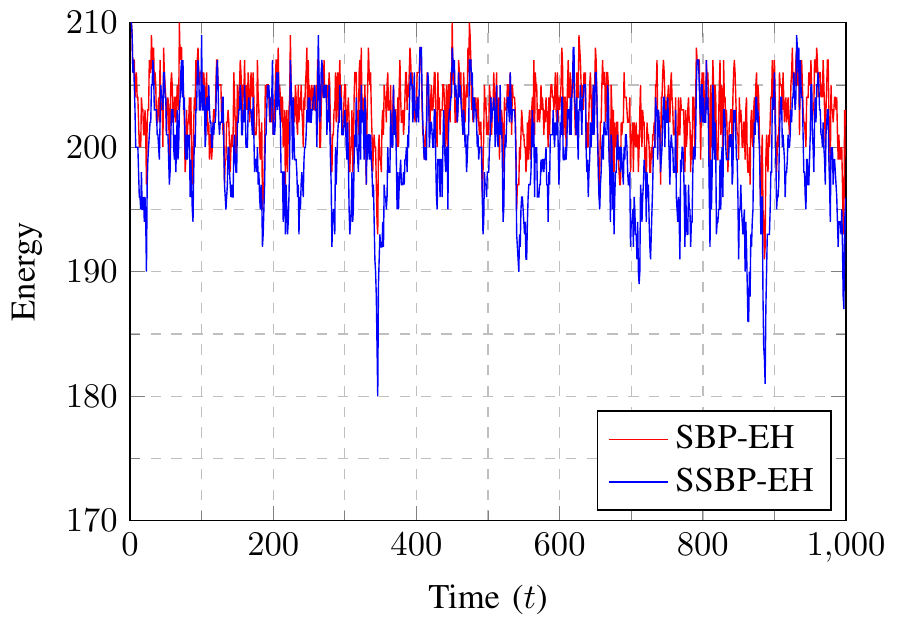}
	\caption{Total energy stored in the network at a given time slot.}
	\label{fig:NetworkEnergy}
\end{figure}

\subsection{Network Balance}

\begin{figure}[t]
	\centering
    \includegraphics[width=\columnwidth]{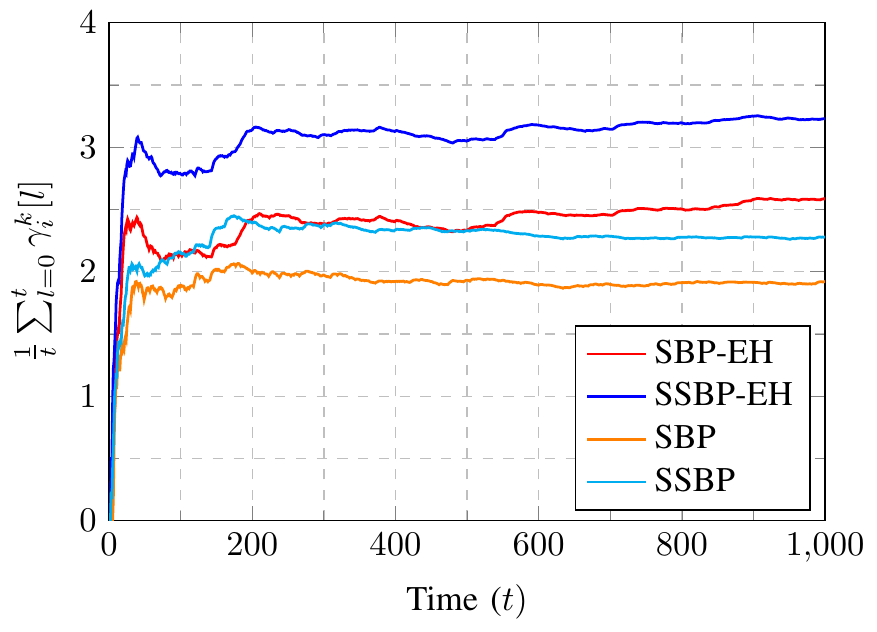}
	\caption{Average value of the queue multipliers $\gamma_i^k$ for node 5 ($i=5$, $k=1$).}
	\label{fig:AvgQueueDual}
\end{figure}

As discussed in Section \ref{sec:Stability}, the choice of the parameters $\bar{\gamma}_i^{k}$, which control the maximum values taken by the queue multipliers $\gamma_i^{k}$, is important to ensure the stability of the data queues. Namely, the optimal multipliers must be smaller than this $\bar{\gamma}_i^{k}$ parameter. In Figure \ref{fig:AvgQueueDual}, we plot the $\gamma_i^k$ multipliers for one of the nodes which supports the most traffic in the network (node 5). The time-average of these dual variables converges to the optimal value. In the chosen scenario, the parameter used, $\bar{\gamma}_i^{k}=10$, is well above the optimal value. Hence, the system satisfies Proposition \ref{cor:QueueStability}, and can be ensured to stabilize the queues. Some additional insight into the importance of the queue multipliers can be gained by a pricing interpretation of the dual problem. Under this interpretation, the dual variables $\gamma_i^{k}[t]$ represent the unit price associated to the routing constraint $a^k_i[t] \leq \sum_{j \in \mathcal{N}_i} \bigl(r_{ij}^k[t] - r_{ji}^k[t]\bigr)+x_i^k[t]$. When the node does not satisfy this constraint, it pays $\gamma_i^{k}[t]$ per unit of constraint violation. Likewise, if it strictly satisfies this constraint, it receives $\gamma_i^{k}[t]$ per unit of constraint satisfaction. In this sense, the $\bar{\gamma}_i^{k}$ parameter represents both the maximum payment that a node can receive and the maximum price it can pay. Hence, the optimal value of $\gamma_i^k$ must necessarily fall below $\bar{\gamma}_i^{k}$ in order to obtain a stable system. We can use this pricing interpretation to compare the different policies. In general, the energy harvesting policies have higher $\gamma_i^{k}[t]$ values than their non-EH counterparts. This is due to the fact that, due to the energy harvesting constraints, the unit violation of the routing constraint is harder to recoup in the EH-aware policies, hence the higher price paid. In a similar note, due to their more aggressive routing decisions, the soft policies also show higher $\gamma_i^{k}[t]$ values than their non-soft counterparts.

Also of interest is the study of the balance characteristics of the network. As discussed previously, the stability guarantees of the network are subject to the existence of a feasible routing solution given the data and energy arrival rates. This motivates another way of showing stability, different from the data queues shown in Fig. \ref{fig:QueuedPackets}. We can consider that a successful routing strategy is expected to route to the sink nodes as many packets as generated by the network. This is given by the network balance expression $\sum_{i \in \mathcal{N}}\sum_{k \in \mathcal{K}}\left( a^k_i[t] -  r_{ij}^k[t] \right)$, where $j=N^k_{(dest)}$. The time average of this measure is shown in Figure \ref{fig:NetworkBalance}. As expected, the time average data network balance goes to zero for all policies. This illustrates that all policies are capable of routing to the sink nodes as many packets as they arrive to the network, hence ensuring queue stability. We previously observed in Fig. \ref{fig:NetworkBalance} that stability occurs around $t=200$ time slots for the SBP and SBP-EH policies and less than $t=100$ time slots for the SSBP and SSBP-EH ones. Those observations can be compared with the network balance of Fig. \ref{fig:NetworkBalance}, where those values correspond to the time around when the slope of the data balance curve starts to go flat. Remarkably, the proposed energy harvesting policies do not lose convergence speed when compared to the non-EH ones. Also, convergence of the SSBP and SSBP-EH policies occurs at a faster rate, a point that we previously raised in Section \ref{subsec:SSBP-EH}.

\begin{figure}[t]
	\centering
    \includegraphics[width=\columnwidth]{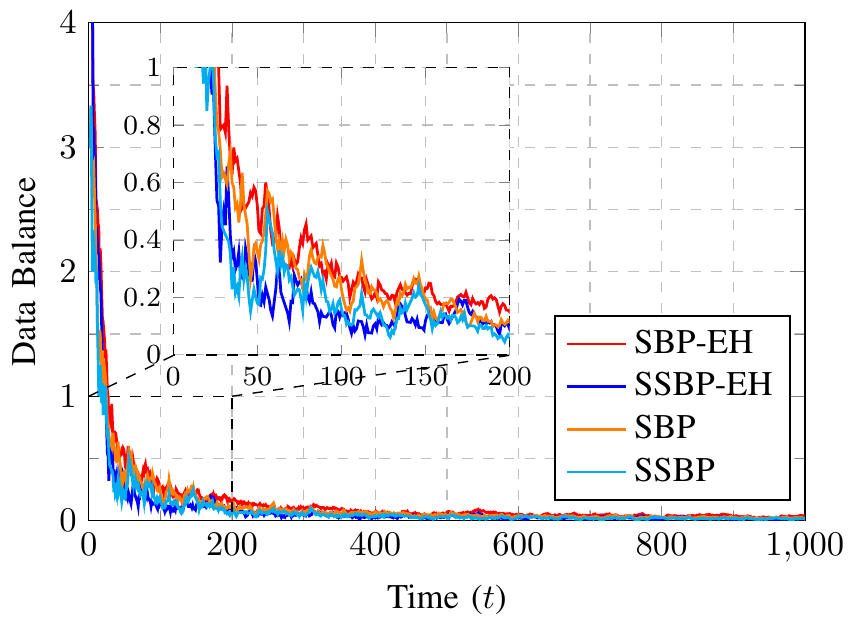}
	\caption{Average data balance in the network, given by the expression $\frac{1}{t}\sum_{l=0}^{t}\sum_{i \in \mathcal{N}}\sum_{k \in \mathcal{K}}\left( a^k_i[l] -  r_{ij}^k[l] \right)$, where $j=N^k_{(dest)}$.}
	\label{fig:NetworkBalance}
\end{figure}

\begin{figure}[t]
	\centering
    \includegraphics[width=\columnwidth]{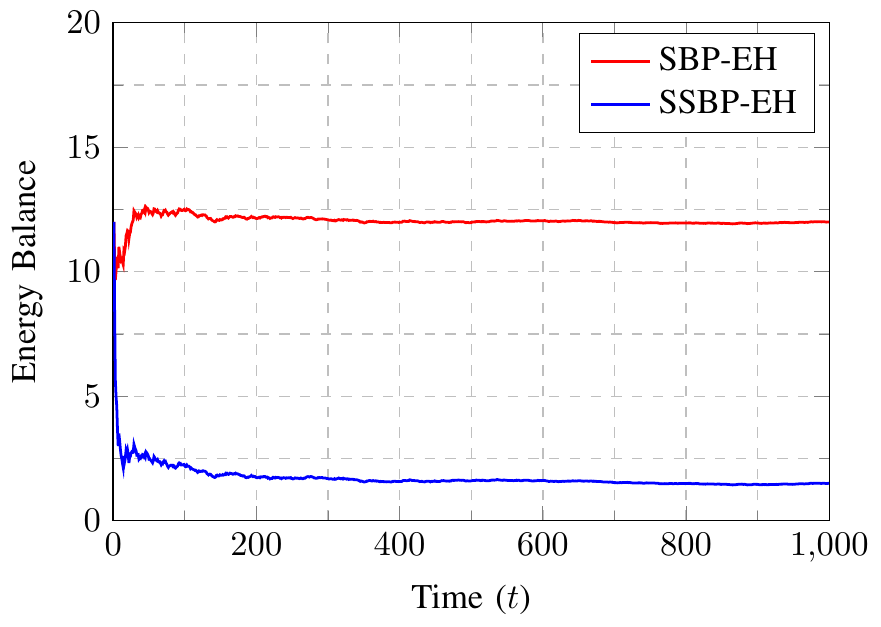}
	\caption{Average energy balance in the network, given by the expression $\frac{1}{t}\sum_{l=0}^{t}\sum_{i \in \mathcal{N}}\sum_{k \in \mathcal{K}}\left( e_i[l] -  \sum_{j \in \mathcal{N}_i} r_{ij}^k[l] \right)$.}
	\label{fig:EnergyBalance}
\end{figure}

Another measure of network balance of interest is related to the energy balance in the network. This can be expressed by $\sum_{i \in \mathcal{N}}\sum_{k \in \mathcal{K}}\left( e_i[t] -  \sum_{j \in \mathcal{N}_i} r_{ij}^k[t] \right)$. This measure serves to quantify how much of the energy harvested in the network is actually being used. The time average of the energy balance is shown in Figure \ref{fig:EnergyBalance}. As expected, given that the network harvests enough energy to support the routing-scheduling decisions, both policies converge to a non-zero value. Once stabilized, the SBP-EH policy has, in average, energy left for around 12 packet transmission in all of the network, while the SSBP-EH only has energy left for an average of 2 packet transmissions. We previously identified in Fig. \ref{fig:NetworkEnergy} the SSBP-EH  to be more aggressive in its energy use. At the same time, we can also say that the SSBP-EH policy uses its energy supply in a more efficient manner. Since the nodes are powered by energy harvesting instead of a limited energy supply, not using available energy can be considered wasteful, as batteries will tend to overflow. In this sense, to use more energy (as in SSBP-EH) rather than to use energy more conservatively (as in SBP-EH), can be seen as a better option. In this sense, SSBP-EH makes a more efficient use of the available energy, resulting in an overall better performance.

\subsection{Network Delay}

An additional important characteristic of routing-scheduling policies is their resulting delay in the packet delivery. While the average delay is proportional to the average number of queued packets in the network, we also study this measure explicitly. In order to do this, and under the assumption of first-in first-out queues, we compute the number of time slots it takes for a packet to be delivered to a sink node. We plot in Figure \ref{fig:HistogramTS} the resulting histogram. In average, the number of time slots it takes to deliver a packet to a sink node is $4.04$ for the SSBP-EH policy, while it is $5.36$ for the SBP-EH policy. This is about a $1$ time slot of difference between the policies. Taking a more detailed look at the histogram, we can see that  the distribution for the SSBP-EH is very similar to the one of the SBP-EH, but with a $1$ time slot shift to the left. As already seen in Fig. \ref{fig:queuesAtNodes}, the more aggressive behavior of the SSBP-EH policy leads to an overall reduction in the network queues. These smaller queues result in a reduction of the waiting time of packets at each hop, which results in a smaller delivery delay.

\begin{figure}[t]
	\centering
    \includegraphics[width=\columnwidth]{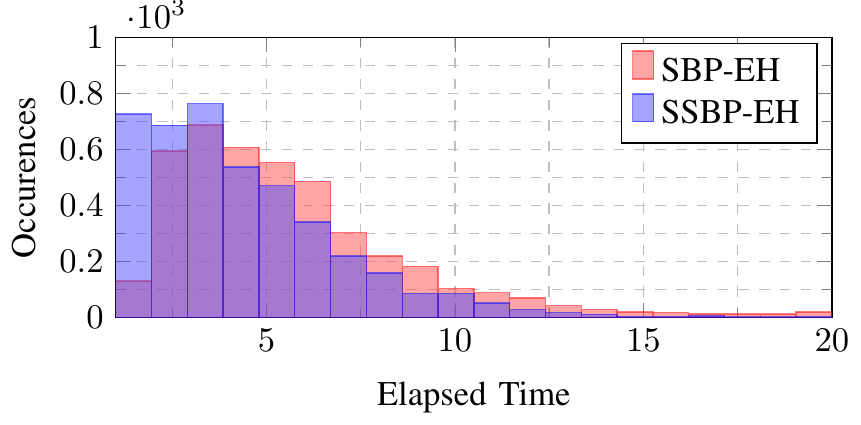}    
	\caption{Histogram of packet elapsed time before reaching a sink node.} 
	\label{fig:HistogramTS}
\end{figure}

\section{Conclusions}
\label{sec:Conclusions}

In this work, we have generalized the stochastic family of backpressure policies to energy harvesting networks. Different from other works, which are based on Lyapunov drift notions, we have resorted to duality theory. This has allowed us to study the problem under a framework based on the correspondence between queues and Lagrange multipliers. Under this framework, we have proposed two policies, (i) SBP-EH, an easy to implement policy where nodes track the difference between their queue multipliers and the ones of their neighbors. The pressure is further reduced by the battery multipliers as the stored energy decreases. Then, the transmit decision is to transmit the flow with the highest pressure. And (ii) SSBP-EH, a probabilistic policy  with improved performance and convergence guarantees, where nodes track the pressure in the same way as SBP-EH but perform an equalization in the form of an inverse waterfilling. This results in a probability mass function for the routing-scheduling decision, where a sample of this distribution is then taken to decide the transmission. For both policies, we have studied the conditions under which energy causality and queue stability are guaranteed, which we have also verified by means of simulations. The numerical results show that given feasible data and energy arrivals, both policies are capable of stabilizing the network. Overall, the SSBP-EH policy shows improvements in queued packets, stabilization speed and delay with respect to the SBP-EH policy. Furthermore, when compared to non-EH policies, the SSBP-EH policy shows to have an asymptotically vanishing gap.

\bibliographystyle{IEEEtran}
\bibliography{bib}

\end{document}